\tikzstyle{vertex} = [circle,fill=black!0,minimum size=3pt,inner sep=0pt]
\newtheorem{corollary}{Corollary}[section]
\theoremstyle{definition} 
\newtheorem{definition}{Definition}[section]
\theoremstyle{plain} 
\newtheorem{lemma}{Lemma}[section]
\newtheorem{observation}{Observation}[section]
\newtheorem{proposition}{Proposition}[section]
\newtheorem{theorem}{Theorem}[section]
\newcommand{\ie}{i.e.\ }
\newcommand{\eg}{e.g.\ }
\newcommand{\resp}{resp.\ }
\title{Lower bounding edit distances between permutations\thanks{A significant portion of this work previously appeared in the Proceedings of the Sixteenth Annual European Symposium on Algorithms (ESA)~\cite{labarre-edit}.}}
\author{Anthony Labarre}
\begin{document}

\maketitle

\begin{abstract}
A number of fields, including the study of genome rearrangements and the design of interconnection networks, deal with the connected problems of sorting permutations in ``as few moves as possible'', using a given set of allowed operations, or computing the number of moves the sorting process requires, often referred to as the \emph{distance} of the permutation. These operations often act on just one or two segments of the permutation, \eg by reversing one segment or exchanging two segments. The \emph{cycle graph} of the permutation to sort is a fundamental tool in the theory of genome rearrangements, and has proved useful in settling the complexity of many variants of the above problems. In this paper, we present an algebraic reinterpretation of the cycle graph of a permutation $\pi$ as an even permutation $\overline{\pi}$, and show how to reformulate our sorting problems in terms of particular factorisations of the latter permutation. Using our framework, we recover known results in a simple and unified way, and obtain a new lower bound on the \emph{prefix transposition distance} (where a \emph{prefix transposition} displaces the initial segment of a permutation), which is shown to outperform previous results. Moreover, we use our approach to improve the best known lower bound on the \emph{prefix transposition diameter} from $2n/3$ to $\left\lfloor 3n/4\right \rfloor$, and investigate a few relations between some statistics on $\pi$ and $\overline{\pi}$.
\end{abstract}


\section{Introduction}

Given a set $S$ of allowed operations and two permutations $\pi$ and $\sigma$ of $\{1,2,\ldots,n\}$, we study the related problems of computing, on the one hand, a sequence of elements of $S$ of minimum length that transforms $\pi$ into $\sigma$, and on the other hand, computing the length of such a sequence, referred to as the \emph{distance} between $\pi$ and $\sigma$. The operations in $S$ usually yield an \emph{edit distance} $d_S(\cdot, \cdot)$ with the property that $d_S(\pi, \sigma)=d_S(\sigma^{-1}\circ\pi, \iota)$ for any two permutations $\pi$ and $\sigma$ of the same set, where $\iota$ is the  \emph{identity permutation} $\langle 1\ 2\ \cdots\ n\rangle$. This property allows us to restrict our attention to \emph{sorting permutations} using a minimum number of operations from $S$, or to computing the distance of a given permutation to the identity permutation rather than to another arbitrary permutation.  Two areas in which these questions have applications are the fields of \emph{genome rearrangements} and \emph{interconnection network design}, which we briefly review below.

In genome rearrangements (see \citet{fertin-combinatorics} for a survey), the permutation to sort represents an ordering of genes in a given genome, and the allowed operations model \emph{mutations} that are known to actually occur in evolution. Rearrangements studied in that context include \emph{reversals}~\cite{kececioglu-exact}, which reverse a segment of the permutation, \emph{block-interchanges}~\cite{christie-block}, which exchange two not necessarily contiguous segments, and \emph{transpositions}~\cite{bafna-transpositions}, which displace a block of contiguous elements. Those seemingly easy problems turn out to be more challenging than they might appear at first: although a polynomial-time algorithm is known for sorting by block-interchanges or computing the associated distance~\cite{christie-block}, the same problems were shown to be NP-hard for reversals~\cite{caprara-sorting}, and more recently for transpositions~\cite{Bulteau2011}.

In interconnection network design (see \citet{lakshmivarahan-symmetry} for a thorough survey), permutations stand \eg for processors, or other devices to be connected, and form the vertex set of a graph whose edges correspond to physical connections between two devices. One wants to build a graph with small degree and small \emph{diameter}, among other desirable properties. \citeauthor{akers-group}'s landmark paper~\cite{akers-group} proposed the idea of choosing a set $S$ that \emph{generates} all permutations of $\{1,2,\ldots,n\}$, and to use the corresponding \emph{Cayley graph}, whose vertex set is the set of all permutations and whose edges connect any two permutations that can be obtained from one another by applying a transformation from $S$, as an interconnection network. 
In that setting, sorting algorithms for permutations correspond to \emph{routing algorithms} for the corresponding networks, since a sequence of elements of $S$ transforming $\pi$ into $\sigma$ corresponds to a path of the same length in the network. 
Two kinds of operations that received a lot of attention in that context are \emph{prefix reversals}~\cite{gates-bounds}, which reverse the initial segment of the permutation, and \emph{prefix exchanges}~\cite{akers-star}, which swap the first element of the permutation with another element. Those operations gave birth to the \emph{pancake network} and \emph{star graph} topologies, respectively, which are extensively studied models in that field. We also mention \emph{prefix transpositions}, which displace the initial segment of the permutation, and were introduced by \citet{dias-prefix} in the context of genome rearrangements in the hope that their study would shed light and give insight on the challenging problem of sorting by transpositions. Those more restricted versions of operations studied in the context of genome rearrangements do not lead to problems simpler than their unrestricted counterparts: the sorting and distance computation problems related to prefix exchanges can be solved in polynomial time~\cite{akers-star}, but the complexity of those problems in the case of prefix transpositions is open, and the problem of sorting by prefix reversals has only recently been showed to be NP-hard~\cite{Bulteau2011-pancakes}, more than thirty years after the first works on the subject~\cite{gates-bounds,gyori-stack}.

The \emph{cycle graph} of a permutation is a ubiquitous structure in the field of genome rearrangements, and has proved useful in resolving many questions related to the problems discussed in the above paragraphs. In this paper, we present a new way of encoding the cycle graph of a permutation $\pi$ as an even permutation $\overline{\pi}$, inspired by a previous work of ours~\cite{doignon-hultman}, and show how to reformulate \emph{any} sorting problem of the form described above in terms of particular factorisations of the latter permutation. We first illustrate the power of our framework by recovering known lower bounds on the block-interchange and transposition distances in a simple and unified way, and then use it to prove a new lower bound on the prefix transposition distance. We prove that our lower bound always outperforms that obtained by \citet{dias-prefix}, and show experimentally that it is a significant improvement over both that result and the only other known lower bound proved by \citet{chitturi-bounding}. We then use this new result to improve the previously best known lower bound on the maximal value of the prefix transposition distance from $2n/3$ to $\left\lfloor 3n/4 \right \rfloor$. Finally, we examine some further properties of the model, and establish connections between statistics on $\pi$ and $\overline{\pi}$.


\section{Notation and definitions}

\subsection{Permutations and conjugacy classes}

Let us start with a quick reminder of basic notions on permutations (for details, see \eg \citet{wielandt-finite}).

\begin{definition}\label{def:permutation}
A \emph{permutation} of 
a set $\Omega$ 
is a bijective application of 
$\Omega$ 
onto itself.
\end{definition}

It is convenient to set $\Omega=\{1, 2, \ldots, n\}$, and we will follow this convention here, although we will also sometimes use the set $\{0, 1, 2, \ldots, n\}$. The \emph{symmetric group} $S_n$ is the set of all permutations of a set of $n$ elements, together with the usual function composition $\circ$, applied from right to left. 
Permutations are denoted by lower case Greek letters, and we will follow the convention of shortening the traditional two-row notation
$$
\pi=\left\langle
\begin{array}{cccc}
1& 2& \cdots& n\\
\pi_1& \pi_2& \cdots& \pi_n
\end{array}
\right\rangle
$$
by keeping only the second row, \ie $\pi=\langle\pi_1\ \pi_2\ \cdots\ \pi_n\rangle$, where  $\pi_i=\pi(i)$. 

\begin{definition}
The \emph{graph} $\Gamma(\pi)$ of the permutation $\pi$ in $S_n$ is the directed graph with ordered vertex set $(\pi_1, \pi_2, \ldots, \pi_n)$ and arc set $\{(i,j)\ |\ \pi_i=j, 1\leq i\leq n\}$.
\end{definition}

Definition~\ref{def:permutation} implies that $\Gamma(\pi)$ decomposes in a single way into disjoint cycles (up to the ordering of cycles and of elements within each cycle), leading to another notation for $\pi$ based on its \emph{disjoint cycle decomposition}.  For instance, when $\pi=\langle 4\ 1\ 6\ 2\ 5\ 7\ 3\rangle$, the disjoint cycle notation is $\pi=(1,4,2)(3,6,7)(5)$ (notice the parentheses and the commas). 

\begin{definition}\label{def:cycle-length}
The \emph{length} of a cycle in a graph is the number of vertices it contains, and a \emph{$k$-cycle} is a cycle of length $k$.
\end{definition}

The number of cycles in a graph $G$ will be denoted by $c(G)$, and the number of cycles of length $k$ will be denoted by $c_k(G)$. We will also distinguish between cycles of odd (\resp even) length, denoting the number of such cycles in $G$ using $c_{odd}(G)$ (\resp $c_{even}(G)$).
It is common practice to omit $1$-cycles in the cycle decomposition of (the graph of) a permutation, and to call that permutation a $k$-cycle if the resulting decomposition consists of a single cycle of length $k>1$. Cycles of length $1$ in the disjoint cycle decomposition of a permutation are referred to as \emph{fixed points}.

\begin{definition}\label{def:even-permutation}
A permutation $\pi$ is \emph{even} if the number of even cycles in $\Gamma(\pi)$ is even  or, equivalently, if it can be expressed as a product of an even number of $2$-cycles. 
\end{definition}

The \emph{alternating group} $A_n$ is the subgroup of $S_n$ formed by the set of all even permutations, together with $\circ$. 
The following notion will be central to this work.

\begin{definition}
The \emph{conjugate} of a permutation $\pi$ by a permutation $\sigma$, both in $S_n$, is the permutation $\pi^{\sigma}=\sigma\circ\pi\circ\sigma^{-1}$, and can be obtained by replacing every element $i$ in the disjoint cycle decomposition of $\pi$ with $\sigma_i$. All permutations in $S_n$ that have the same disjoint cycle decomposition form a \emph{conjugacy class} (of $S_n$).
\end{definition}

\subsection{Generating sets and edit distances}\label{sec:intro-gr-prefix}

We are interested in distances between permutations based on operations that can themselves be modelled as permutations. More formally, given a subset $S$ of $S_n$ and two permutations $\pi$ and $\sigma$ in $S_n$, we have two goals:
\begin{enumerate}
 \item to find a sequence of elements $s_1, s_2, \ldots, s_t$ from $S$ whose cardinality is minimum and whose product transforms $\pi$ into $\sigma$ (or conversely, $\sigma$ into $\pi$): $$\pi\circ s_1\circ s_2\circ\cdots\circ s_t=\sigma.$$
 \item to find the length of such a sequence, called the \emph{$S$ distance} between $\pi$ and $\sigma$. Distances whose definition is based on a set of allowed operations as described above are often referred to as \emph{edit distances}.
\end{enumerate}

Note that $S$ must be \emph{symmetric}, \ie $s\in S$ if and only if $s^{-1}\in S$, for the corresponding distance to satisfy the symmetry axiom. An immediate corollary of this property is that for any $\pi$ in $S_n$, we have $d(\pi, \iota)=d(\pi^{-1}, \iota)$. For any two permutations of the same set to be a finite distance apart, $S$ must also satisfy the following property.

\begin{definition}\label{def:generating-set}
A set $S\subset S_n$ is said to \emph{generate} $S_n$, or to be a \emph{generating set} of $S_n$, if every element of $S_n$ can be expressed as the product of a finite number of elements of $S$. We call the elements of $S$ \emph{generators} of $S_n$.
\end{definition}

Moreover, all generating sets we will consider in this paper yield distances that satisfy the following property.

\begin{definition}\label{def:left-invariant-distance}
A distance $d$ on $S_n$ is \emph{left-invariant} if for all $\pi$, $\sigma$, $\tau$ in $S_n$, we have: $d(\pi, \sigma) = d(\tau\circ\pi, \tau\circ\sigma).$
\end{definition}

Intuitively, left-invariance models the fact that, given any two permutations $\pi$ and $\sigma$ to be transformed into one another, we can rename the elements of either permutation as we wish without changing the value of the distance between both permutations, as long as we renumber the elements of the other permutation accordingly. Since we will most of the time be considering the distance between a permutation $\pi$ and the identity permutation $\iota$, we will often abbreviate $d(\pi,\iota)$ to $d(\pi)$. 

It can be easily seen that both problems mentioned at the beginning of this section can be reformulated in terms of finding a minimum-length factorisation of $\pi$ that consists only of elements of $S$, since
$$\pi\circ s_1\circ s_2\circ \cdots\circ s_t=\iota\Leftrightarrow\pi=s_t^{-1}\circ s_{t-1}^{-1}\circ\cdots\circ s_1^{-1}$$
and $S$ is symmetric.
Finally, another parameter of interest in the study of those distances is the largest value they can reach.
\begin{definition}
The \emph{diameter} of a set $U$ under a distance $d$ is $\max_{s,t\in U}d(s,t)$.
\end{definition}

\subsection{Genome rearrangements and the cycle graph}

We recall here a few operations that are commonly used in the fields of genome rearrangements and interconnection network design to build generating sets of $S_n$.

\begin{definition}\label{def:block-interchange}
\cite{christie-block} 
The \emph{block-interchange} $\beta(i, j, k, l)$ with $1\leq i<j\leq k<l\leq n+1$ 
is the permutation that exchanges the closed intervals determined respectively by $i$ and $j-1$ and by $k$ and $l-1$:
$$
\left\langle
\begin{array}{l}
1\ \cdots\ i-1\ \fbox{$i\ \cdots\ j-1$}\ j\ j+1\ \cdots\ k-1\ \fbox{$k\ \cdots\ l-1$}\ l\ l+1\ \cdots\ n \\
\raisebox{-.05in}{$1\ \cdots\ i-1\ \fbox{$k\ \cdots\ l-1$}\ j\ j+1\ \cdots\ k-1\ \fbox{$i\ \cdots\ j-1$}\ l\ l+1\ \cdots\ n$} \\
\end{array}
\right\rangle.
$$ 
\end{definition}

\noindent Two particular cases of block-interchanges are of interest: 
\begin{enumerate}
 \item 
when $j=k$, the resulting operation exchanges two adjacent intervals, and is called a \emph{transposition}~\cite{bafna-transpositions}, denoted by $\tau(i, j, l)$; 
\item when $j=i+1$ and $l=k+1$, the resulting operation swaps two not necessarily adjacent elements in respective positions $i$ and $k$, and is called an \emph{exchange}, denoted by $\varepsilon(i, k)$. 
\end{enumerate}

We use the notation $bid(\pi)$, $td(\pi)$ and $exc(\pi)$ for the block-interchange distance, the transposition distance, and the exchange distance of $\pi$, respectively. The operations we described above can be further restricted by setting $i=1$ in their definition, thereby transforming them into so-called ``prefix rearrangements''. 
The corresponding ``prefix distances'' are defined in an analogous manner, with the additional restriction that all operations must act on the initial segment of the permutation. We denote $ptd(\pi)$ and $pexc(\pi)$ the \emph{prefix transposition distance} and \emph{prefix exchange distance} of $\pi$, respectively. While sorting by transpositions is NP-hard~\cite{Bulteau2011} and the computational complexity of sorting by prefix transpositions is unknown, polynomial-time algorithms exist for sorting by block-interchanges~\cite{christie-block}, exchanges~\cite{jerrum-complexity} or prefix exchanges~\cite{akers-star}, as well as formulas for computing the associated distances.

We will have more to say about sorting by transpositions and sorting by block-interchanges in Section~\ref{sec:recovery}, where we will give simple proofs of lower bounds on the two corresponding distances, as well as about sorting by prefix transpositions in Section~\ref{sec:sbpt}, where we will prove new and improved lower bounds on the associated distance and diameter. Meanwhile, we conclude this section with the following traditional tool introduced by \citet{bafna-transpositions}, which has proved most useful in the study of genome rearrangements. 

\begin{definition}\label{def:cycle-graph}
The \emph{cycle graph} of a permutation $\pi$ in $S_n$ is the bicoloured directed graph $G(\pi)$, whose vertex set $(\pi_0=0, \pi_1, \ldots, \pi_n)$ is ordered by positions, and whose arc set consists of:
\begin{itemize}
\item \emph{black} arcs $\{(\pi_i, \pi_{i-1})\ |\ 1\leq i\leq n\}\cup\{(\pi_0,\pi_n)\}$;
\item \emph{grey} arcs $\{(\pi_i, \pi_i+1)\ |\ 0\leq i\leq n\}\cup\{(n,0)\}$.
\end{itemize} 
\end{definition}

The arc set of $G(\pi)$ 
decomposes in a single way into arc-disjoint \emph{alternating cycles}, \ie cycles that alternate black and grey arcs.
The \emph{length} of an alternating cycle in $G(\pi)$ is the number of black arcs it contains, and a \emph{$k$-cycle} in $G(\pi)$ is an alternating cycle of length $k$ (note that this differs from Definition~\ref{def:cycle-length}). Figure~\ref{fig:modified-cycle-graph-example} shows an example of a cycle graph, together with its decomposition into a $5$-cycle and a $3$-cycle.

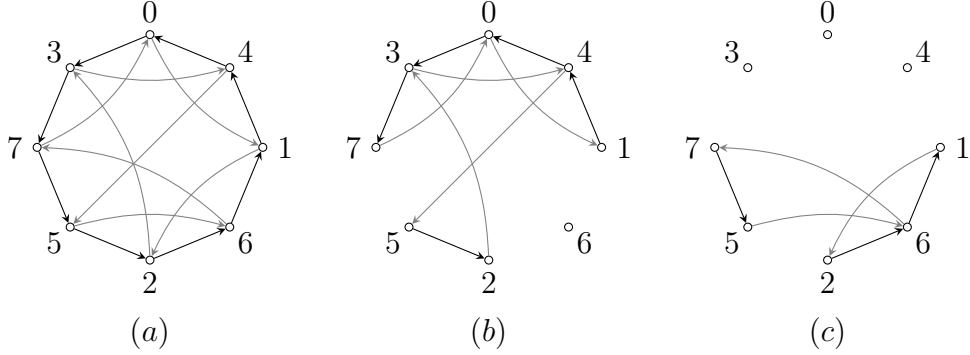
\begin{figure}[htbp]
\begin{center}
\begin{tabular}{ccc}
\begin{tikzpicture}[scale=0.75,>=stealth]
  \path (0*45:2.0cm) node [vertex,draw,circle] (v1) {};
  \path (1*45:2.0cm) node [vertex,draw,circle] (v2) {};
  \path (2*45:2.0cm) node [vertex,draw,circle] (v3) {};
  \path (3*45:2.0cm) node [vertex,draw,circle] (v4) {};
  \path (4*45:2.0cm) node [vertex,draw,circle] (v5) {};
  \path (5*45:2.0cm) node [vertex,draw,circle] (v6) {};
  \path (6*45:2.0cm) node [vertex,draw,circle] (v7) {};
  \path (7*45:2.0cm) node [vertex,draw,circle] (v8) {};
  \path (0*45:2.4cm) node {$1$};
  \path (1*45:2.4cm) node {$4$};
  \path (2*45:2.4cm) node {$0$};
  \path (3*45:2.4cm) node {$3$};
  \path (4*45:2.4cm) node {$7$};
  \path (5*45:2.4cm) node {$5$};
  \path (6*45:2.4cm) node {$2$};
  \path (7*45:2.4cm) node {$6$};
  \draw[->] (v1) -- (v2);
  \draw[->] (v2) -- (v3);
  \draw[->] (v3) -- (v4);
  \draw[->] (v4) -- (v5);
  \draw[->] (v5) -- (v6);
  \draw[->] (v6) -- (v7);
  \draw[->] (v7) -- (v8);
  \draw[->] (v8) -- (v1);
  \draw[gray,->] (v1) to [bend right=20] (v7);
  \draw[gray,->] (v7) to [bend right=20] (v4);
  \draw[gray,->] (v5) to [bend right=20] (v3);
  \draw[gray,->] (v2) -- (v6);
  \draw[gray,->] (v8) to [bend right=20] (v5);
  \draw[gray,->] (v4) to [bend right=15] (v2);
  \draw[gray,->] (v6) to [bend left=15] (v8);
  \draw[gray,->] (v3) to [bend right=20] (v1);
\end{tikzpicture}
&
\begin{tikzpicture}[scale=0.75,>=stealth]
  \path (0*45:2.0cm) node [vertex,draw,circle] (v1) {};
  \path (1*45:2.0cm) node [vertex,draw,circle] (v2) {};
  \path (2*45:2.0cm) node [vertex,draw,circle] (v3) {};
  \path (3*45:2.0cm) node [vertex,draw,circle] (v4) {};
  \path (4*45:2.0cm) node [vertex,draw,circle] (v5) {};
  \path (5*45:2.0cm) node [vertex,draw,circle] (v6) {};
  \path (6*45:2.0cm) node [vertex,draw,circle] (v7) {};
  \path (7*45:2.0cm) node [vertex,draw,circle] (v8) {};
  \path (0*45:2.4cm) node {$1$};
  \path (1*45:2.4cm) node {$4$};
  \path (2*45:2.4cm) node {$0$};
  \path (3*45:2.4cm) node {$3$};
  \path (4*45:2.4cm) node {$7$};
  \path (5*45:2.4cm) node {$5$};
  \path (6*45:2.4cm) node {$2$};
  \path (7*45:2.4cm) node {$6$};
  \draw[->] (v1) -- (v2);
  \draw[->] (v2) -- (v3);
  \draw[->] (v3) -- (v4);
  \draw[->] (v4) -- (v5);
  \draw[->] (v6) -- (v7);

  \draw[gray,->] (v7) to [bend right=20] (v4);
  \draw[gray,->] (v5) to [bend right=20] (v3);
  \draw[gray,->] (v2) -- (v6);
  \draw[gray,->] (v4) to [bend right=15] (v2);
  \draw[gray,->] (v3) to [bend right=20] (v1);
\end{tikzpicture}
&
\begin{tikzpicture}[scale=0.75,>=stealth]
  \path (0*45:2.0cm) node [vertex,draw,circle] (v1) {};
  \path (1*45:2.0cm) node [vertex,draw,circle] (v2) {};
  \path (2*45:2.0cm) node [vertex,draw,circle] (v3) {};
  \path (3*45:2.0cm) node [vertex,draw,circle] (v4) {};
  \path (4*45:2.0cm) node [vertex,draw,circle] (v5) {};
  \path (5*45:2.0cm) node [vertex,draw,circle] (v6) {};
  \path (6*45:2.0cm) node [vertex,draw,circle] (v7) {};
  \path (7*45:2.0cm) node [vertex,draw,circle] (v8) {};
  \path (0*45:2.4cm) node {$1$};
  \path (1*45:2.4cm) node {$4$};
  \path (2*45:2.4cm) node {$0$};
  \path (3*45:2.4cm) node {$3$};
  \path (4*45:2.4cm) node {$7$};
  \path (5*45:2.4cm) node {$5$};
  \path (6*45:2.4cm) node {$2$};
  \path (7*45:2.4cm) node {$6$};
  \draw[->] (v5) -- (v6);
  \draw[->] (v7) -- (v8);
  \draw[->] (v8) -- (v1);
  \draw[gray,->] (v1) to [bend right=20] (v7);
  \draw[gray,->] (v8) to [bend right=20] (v5);
  \draw[gray,->] (v6) to [bend left=15] (v8);

\end{tikzpicture}
\\
$(a)$ & $(b)$ & $(c)$ \\
\end{tabular}
\end{center}
\caption{$(a)$ The cycle graph of $\langle 4\ 1\ 6\ 2\ 5\ 7\ 3\rangle$; $(b),(c)$ its decomposition into two alternating cycles.}
\label{fig:modified-cycle-graph-example}
\end{figure}


\section{A general lower bounding technique}\label{sec:general-lower-bounding-technique}

We now present a framework for obtaining lower bounds on edit distances between permutations in a simple and unified way. To that end, we adapt a bijection previously introduced by
\citet{doignon-hultman}:
\begin{equation}\label{eq-alpha}
f:S_n\rightarrow A_{n+1}:\pi\mapsto\overline{\pi} =  (0,1,2,\ldots,n)\circ (0,\pi_n,\pi_{n-1},\ldots,\pi_1),
\end{equation}
which in particular maps $\iota$ onto $\overline{\iota}=\langle 0\ 1\ 2\ \cdots\ n\rangle$. That mapping allows us to encode the structure of a cycle graph $G(\pi)$ using an even permutation $\overline{\pi}$ in an intuitive way, which corresponds to decomposing the cycle graph into the product of two ``monochromatic cycles'', namely, the cycle made of all black arcs (\ie $(0,\pi_n,\pi_{n-1},\ldots,\pi_1)$) and the cycle made of all grey arcs (\ie $(0,1,2,\ldots,n)$). The construction is perhaps best understood using an example: let $\pi=\langle 4\ 1\ 6\ 2\ 5\ 7\ 3\rangle$, whose cycle graph is depicted in Figure~\ref{fig:modified-cycle-graph-example}$(a)$. Then 
$$\overline\pi=(0,1,2,3,4,5,6,7)\circ(0,3,7,5,2,6,1,4)=(0,4,1,5,3)(2,7,6),$$ 
and the two disjoint cycles of $\overline{\pi}$ correspond to the two alternating cycles of $G(\pi)$, whose elements they list in the order they are encountered (up to rotation); indeed:
\begin{enumerate}
 \item the first cycle of $G(\pi)$ (Figure~\ref{fig:modified-cycle-graph-example}$(b)$) starts with $0$, then visits $4$ after following a black-grey path (\ie a black arc followed by a grey arc), then visits $1$ after following a black-grey path, and in the same way visits $5$ and $3$ before coming back to $0$, which corresponds to the first cycle of $\overline{\pi}$;
 \item the second cycle of $G(\pi)$ (Figure~\ref{fig:modified-cycle-graph-example}$(c)$) starts with $2$, then visits $7$ after following a black-grey path, and in the same way visits $6$ before coming back to $2$, which corresponds to the second cycle of $\overline{\pi}$.
\end{enumerate}

Note that the order in which we decide to follow arcs (first a black arc and then a grey arc) is given by the order in which the two cycles are multiplied. An alternative definition\footnote{This is actually the definition we used in the conference version of this paper~\cite{labarre-edit}.} of $\overline{\pi}$ could therefore have been $(0,\pi_n,\pi_{n-1},\ldots,\pi_1)\circ(0,1,2,\ldots,n)$, which can be seen to be equivalent to our definition when conjugated by $(0,n,n-1,\ldots,1)$, and whose cycles are interpreted exactly as above, with the modification that grey arcs are followed first. 
Consequently, speaking about cycles of $\overline{\pi}$, of $\Gamma(\overline{\pi})$ or of $G(\pi)$ is equivalent. We will now demonstrate how $f(\cdot)$ can be used to obtain results on the sorting and distance computation problems we discussed in Section~\ref{sec:intro-gr-prefix}. The following lemma expresses how the action of \emph{any} rearrangement operation $\sigma$ on $\pi$ is translated on $\overline{\pi}$. We will find it convenient to identify permutations in $S_n$ with their extended versions in $S_{n+1}$ (\ie we identify $\pi$ with $\langle 0\ \pi_1\ \pi_2\ \cdots\ \pi_n\rangle$). This allows us to express any permutation $\pi$ in $S_n$ as follows:
\begin{equation}\label{eqn:pi-from-overline-pi}
\pi = (0,\pi_n,\pi_{n-1},\ldots,\pi_1)\circ\pi\circ(0,1,2,\ldots,n).
\end{equation}

\begin{lemma}\label{lemma:value-of-overline-brack-pi-circ-sigma-brack}
For all $\pi$, $\sigma$ in $S_n$, we have $\overline{\pi\circ\sigma}=\overline{\pi}\circ\overline{\sigma}^\pi.$
\end{lemma}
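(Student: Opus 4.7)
The plan is to re-express $\overline{\pi}$ as a commutator-like product that decouples the roles of $\pi$ and of the cycle $g := (0,1,2,\ldots,n)$, and then to verify the identity by direct cancellation in $S_{n+1}$ (viewing every permutation of $\{1,\ldots,n\}$ as its extension fixing $0$, following the convention already set out before the lemma).

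First I would establish the identity
$$(0,\pi_n,\pi_{n-1},\ldots,\pi_1) = \pi\circ g^{-1}\circ\pi^{-1}.$$
This is just the standard description of how conjugation acts on cycle notation: conjugating $g^{-1} = (0,n,n-1,\ldots,1)$ by $\pi$ replaces each entry $i$ by $\pi(i) = \pi_i$, and since $\pi(0) = 0$ this yields exactly $(0,\pi_n,\pi_{n-1},\ldots,\pi_1)$. Substituting back into~(\ref{eq-alpha}) gives the compact expression
$$\overline{\pi} = g\circ\pi\circ g^{-1}\circ\pi^{-1}.$$

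Given this reformulation, the lemma reduces to a routine computation. Applying the same formula to $\sigma$ and to $\pi\circ\sigma$, and using $(\pi\circ\sigma)^{-1} = \sigma^{-1}\circ\pi^{-1}$, we get
$$\overline{\pi\circ\sigma} \;=\; g\circ\pi\circ\sigma\circ g^{-1}\circ\sigma^{-1}\circ\pi^{-1}.$$
On the other hand, since $\overline{\sigma}^\pi = \pi\circ\overline{\sigma}\circ\pi^{-1} = \pi\circ g\circ\sigma\circ g^{-1}\circ\sigma^{-1}\circ\pi^{-1}$,
$$\overline{\pi}\circ\overline{\sigma}^\pi \;=\; (g\circ\pi\circ g^{-1}\circ\pi^{-1})\circ(\pi\circ g\circ\sigma\circ g^{-1}\circ\sigma^{-1}\circ\pi^{-1}),$$
and the inner factors $\pi^{-1}\circ\pi$ and then $g^{-1}\circ g$ cancel, producing exactly $g\circ\pi\circ\sigma\circ g^{-1}\circ\sigma^{-1}\circ\pi^{-1}$, which matches the previous display.

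The only nontrivial step is the first one: recognising the ``black-arc'' cycle in $\overline{\pi}$ as the conjugate $\pi\circ g^{-1}\circ\pi^{-1}$ of the ``grey-arc'' cycle. Once this algebraic reformulation is in hand, the rest is mechanical cancellation in the group $S_{n+1}$ and requires no further ideas; in particular, no case analysis on the structure of $\sigma$ is needed, which is exactly the point of the lemma — the translation of rearrangement operations through $f$ is uniform across all $\sigma\in S_n$.
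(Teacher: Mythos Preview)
Your proof is correct and is essentially the same as the paper's, just organised slightly differently. The paper's Equation~(\ref{eqn:pi-from-overline-pi}), namely $\pi=(0,\pi_n,\ldots,\pi_1)\circ\pi\circ(0,1,\ldots,n)$, is algebraically equivalent to your identity $(0,\pi_n,\ldots,\pi_1)=\pi\circ g^{-1}\circ\pi^{-1}$, and the paper inserts it mid-computation rather than first rewriting $\overline{\pi}$ as the commutator $g\circ\pi\circ g^{-1}\circ\pi^{-1}$; the resulting chain of cancellations is the same.
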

\begin{proof}
By definition, we have:
\begin{eqnarray*}
\overline{\pi\circ\sigma}&=&(0,1,2,\ldots,n)\circ(0,(\pi\circ\sigma)_n, (\pi\circ\sigma)_{n-1},\ldots,(\pi\circ\sigma)_1)\\
&=&(0,1,2,\ldots,n)\circ\pi\circ(0,\sigma_n, \sigma_{n-1},\ldots,\sigma_1)\circ\pi^{-1}\\
&=&(0,1,2,\ldots,n)\circ(0,\pi_n,\pi_{n-1},\ldots,\pi_1)\circ\pi\circ(0,1,2,\ldots,n)\\
&&\circ(0,\sigma_n, \sigma_{n-1},\ldots,\sigma_1)\circ\pi^{-1} \quad\quad\quad\quad\quad\quad\quad\quad\quad\quad\mbox{(using Equation~\ref{eqn:pi-from-overline-pi})}\\
&=& \overline{\pi}\circ\overline{\sigma}^\pi.
\end{eqnarray*}
\end{proof}

We are now ready to prove our main result.

\begin{theorem}\label{thm:main-theorem}
Let $S$ be a subset of $S_n$ whose elements are mapped by $f(\cdot)$ onto $S'\subseteq A_{n+1}$. Moreover, let $\mathscr C$ be the union of the conjugacy classes (of $S_{n+1}$) that intersect with $S'$; then for any $\pi$ in $S_n$, any factorisation of $\pi$ into $t$ elements of $S$ yields a factorisation of $\overline{\pi}$ into $t$ elements of $\mathscr C$.
\end{theorem}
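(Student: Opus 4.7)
The plan is to proceed by induction on $t$, using Lemma~\ref{lemma:value-of-overline-brack-pi-circ-sigma-brack} to peel off one generator at a time and the definition of conjugacy class to control what kind of factor gets produced at each step.

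For the base case $t = 1$, I would observe that if $\pi = s_1$ with $s_1 \in S$, then $\overline{\pi} = \overline{s_1} \in S' \subseteq \mathscr{C}$ by definition of $\mathscr{C}$. For the inductive step, assume the statement holds for any permutation admitting an $(t-1)$-factorisation over $S$, and suppose $\pi = s_1 \circ s_2 \circ \cdots \circ s_t$ with each $s_i \in S$. Set $\rho = s_1 \circ s_2 \circ \cdots \circ s_{t-1}$, so that $\pi = \rho \circ s_t$. Applying Lemma~\ref{lemma:value-of-overline-brack-pi-circ-sigma-brack} gives
\[
\overline{\pi} \;=\; \overline{\rho \circ s_t} \;=\; \overline{\rho} \circ \overline{s_t}^{\,\rho}.
\]

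By the induction hypothesis, $\overline{\rho}$ factors into $t-1$ elements of $\mathscr{C}$. Since $\overline{s_t} \in S'$ and $\overline{s_t}^{\,\rho}$ is obtained from $\overline{s_t}$ by conjugation in $S_{n+1}$ (treating $\rho \in S_n$ as an element of $S_{n+1}$ that fixes $0$), the permutation $\overline{s_t}^{\,\rho}$ has the same disjoint cycle structure as $\overline{s_t}$ and therefore lies in the same conjugacy class of $S_{n+1}$. That conjugacy class meets $S'$ (it contains $\overline{s_t}$), so by definition of $\mathscr{C}$ we have $\overline{s_t}^{\,\rho} \in \mathscr{C}$. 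Concatenating with the factorisation of $\overline{\rho}$ yields the desired factorisation of $\overline{\pi}$ into $t$ elements of $\mathscr{C}$.

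There is no real obstacle: the entire argument is bookkeeping once the lemma is in hand. The only point that requires a moment of care is the identification of $\rho \in S_n$ with an element of $S_{n+1}$, which is already explicitly in force above Lemma~\ref{lemma:value-of-overline-brack-pi-circ-sigma-brack} via the convention $\pi \equiv \langle 0\ \pi_1\ \cdots\ \pi_n\rangle$; this ensures the conjugation $\overline{s_t}^{\,\rho}$ is well-defined inside $S_{n+1}$ and that the notion of conjugacy class it invokes is exactly the one used to define $\mathscr{C}$.
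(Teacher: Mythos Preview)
Your proof is correct and follows essentially the same inductive strategy as the paper, invoking Lemma~\ref{lemma:value-of-overline-brack-pi-circ-sigma-brack} to split off one generator and observing that conjugation preserves membership in $\mathscr{C}$. The only cosmetic difference is that you peel off the rightmost factor $s_t$ whereas the paper peels off the leftmost one; your choice is slightly tidier, since only the single element $\overline{s_t}^{\,\rho}$ needs to be checked, while the paper must distribute the conjugation by $g_t$ across the entire inductive factorisation of $\overline{\sigma}$.
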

\begin{proof}
Induction on $t$. The base case is $\pi\in S$, and clearly $\overline{\pi}\in S'\subseteq \mathscr C$. For the induction, let $\pi=g_t\circ g_{t-1}\circ\cdots\circ g_1$, where $g_i\in S$ for $1\leq i\leq t$, and let $\sigma=g_{t-1}\circ\cdots\circ g_2\circ g_1$\ ; by Lemma~\ref{lemma:value-of-overline-brack-pi-circ-sigma-brack}, we have:
$$\overline{\pi}=\overline{g_t\circ g_{t-1}\circ\cdots\circ g_2\circ g_1}  =\overline{g_t\circ\sigma}=\overline{g_t}\circ\overline{\sigma}^{g_t}.$$
By induction, $\overline{\sigma}=g'_{t-1}\circ g'_{t-2}\circ\cdots\circ g'_1$\ , where $g'_i\in \mathscr C$ for $1\leq i\leq t-1$; therefore:
\begin{eqnarray*}
g_t\circ\overline{\sigma}\circ g_t^{-1}&=&g_t\circ g'_{t-1}\circ g'_{t-2}\circ\cdots\circ g'_1\circ g_t^{-1} \\
                                       &=&\underbrace{g_t\circ g'_{t-1}\circ g_t^{-1}}_{h_{t-1}}\circ \underbrace{g_t\circ g'_{t-2}\circ g_t^{-1}}_{h_{t-2}}\circ g_t\circ\cdots\circ g_t^{-1}\circ \underbrace{g_t\circ g'_1\circ g_t^{-1}}_{h_1}\ ,
\end{eqnarray*}
and $h_1, h_2, \ldots, h_{t-1}\in\mathscr C$, which completes the proof.
\end{proof}

We will use Theorem~\ref{thm:main-theorem} in the next two sections to prove lower bounds on several edit distances between permutations.


\section{Recovering previous results}\label{sec:recovery}

We illustrate how to use Theorem~\ref{thm:main-theorem} to recover two previously known results on $bid$ and $td$. The general idea is as follows: as we explained in Section~\ref{sec:intro-gr-prefix}, if $S$ is symmetric, then any sorting sequence of length $t$ for $\pi$ made of elements of $S$ yields a factorisation of $\pi$ into the product of $t$ elements of $S$, 
which can in turn be converted, as in the proof of Theorem~\ref{thm:main-theorem}, into a factorisation of $\overline{\pi}$ into the product of $t$ elements of $S'\subseteq\mathscr C$. Therefore, the length of a shortest factorisation of $\overline{\pi}$ into the product of elements of $\mathscr C$ is a lower bound on the length of a factorisation of $\pi$ into the product of elements of $S$, and we can obtain a lower bound on the distance of interest by:
\begin{enumerate}
 \item characterising the set of images of the elements in $S$ by $f(\cdot)$, and
 \item computing the distance of $\overline{\pi}$ with respect to $\mathscr C$.
\end{enumerate}
Let us now show how we can obtain a lower bound on the block-interchange distance. We start by characterising the image of a block-interchange by our mapping.

\begin{lemma}\label{lemma:overline-beta-equals-two-crossing-two-cycles}
For any block-interchange $\beta(i,j,k,l)$ in $S_n$, we have $$\overline{\beta(i,j,k,l)}=(j,l)\circ(i,k).$$
\end{lemma}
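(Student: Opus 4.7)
The plan is to compute $\overline{\beta(i,j,k,l)}$ directly from the definition in Equation~\ref{eq-alpha}, rather than trying to manipulate the statement algebraically (e.g.\ via Lemma~\ref{lemma:value-of-overline-brack-pi-circ-sigma-brack}), since $\beta$ is the input permutation and not given as a product. First I would write down $\beta$ in one-line notation,
$$\beta=\langle 1,\ldots,i-1,\ k,\ldots,l-1,\ j,\ldots,k-1,\ i,\ldots,j-1,\ l,\ldots,n\rangle,$$
and then read off the reversed sequence $\beta_n,\beta_{n-1},\ldots,\beta_1$ in order to form the ``black'' cycle
$$\alpha\;:=\;(0,\beta_n,\beta_{n-1},\ldots,\beta_1)\;=\;(0,\,n,\ldots,l,\,j-1,\ldots,i,\,k-1,\ldots,j,\,l-1,\ldots,k,\,i-1,\ldots,1)$$
that appears in the definition of $\overline\beta$.

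Next I would evaluate $\overline{\beta}(m)=\rho(\alpha(m))$ at every $m\in\{0,1,\ldots,n\}$, where $\rho:=(0,1,\ldots,n)$ is just the successor map $x\mapsto x+1\pmod{n+1}$. The key observation is that the cycle $\alpha$ consists of five ``monotone descending'' runs separated by four jumps. Inside any descending run one has $\alpha(m)=m-1$, hence $\rho(\alpha(m))=m$, so every such $m$ is a fixed point of $\overline{\beta}$; likewise $\alpha(0)=n$ gives $\overline{\beta}(0)=0$. The only non-trivial action therefore happens at the four jump endpoints, and a direct read-off from $\alpha$ yields
$$\alpha(i)=k-1,\quad \alpha(k)=i-1,\quad \alpha(j)=l-1,\quad \alpha(l)=j-1,$$
so $\overline{\beta}(i)=k$, $\overline{\beta}(k)=i$, $\overline{\beta}(j)=l$, $\overline{\beta}(l)=j$. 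Combining these four transpositions (which are disjoint, since $i<j\le k<l$) gives $\overline{\beta}=(i,k)(j,l)=(j,l)\circ(i,k)$, as claimed.

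The main obstacle is purely notational bookkeeping around the degenerate cases in which one of the ranges $[1,i-1]$, $[j,k-1]$, $[l,n]$, or even $[i,j-1]$ and $[k,l-1]$, is empty (in particular $i=1$, $j=k$, or $l=n+1$). In each such case the corresponding descending run simply disappears from $\alpha$; one verifies that the argument above still produces the same four non-fixed images, with the convention that $l=n+1$ is identified with $0\in\{0,1,\ldots,n\}$, consistent with the embedding of $S_n$ into $S_{n+1}$ used throughout the paper. Once these boundary checks are made, the identification $\overline{\beta}=(j,l)\circ(i,k)$ follows uniformly.
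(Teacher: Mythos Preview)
Your proposal is correct and follows essentially the same approach as the paper: both write out the black cycle $(0,\beta_n,\ldots,\beta_1)$ explicitly from the one-line form of $\beta$ and compose it with $(0,1,\ldots,n)$, observing that every element inside a descending run becomes a fixed point while the four ``jump'' positions yield $(i,k)$ and $(j,l)$. One small slip: the two $2$-cycles are \emph{not} disjoint when $j=k$ (so ``disjoint, since $i<j\le k<l$'' is inaccurate), but since you already flag $j=k$ among the degenerate cases and write the answer as a composition $(j,l)\circ(i,k)$ rather than a product of commuting transpositions, this does not affect the argument.
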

\begin{proof}
Equation~\ref{eq-alpha} and Definition~\ref{def:block-interchange} yield
\begin{eqnarray*}
&&(0,1,2,\ldots,n)\circ(0,n,n-1,\ldots,l,j-1,j-2,\ldots,i,k-1,k-2,\ldots,\\
&&j,l-1,l-2,\ldots, k,i-1,i-2,\ldots,1) \\
&=&(0)(n)(n-1)\cdots(l+1)(l,j)(l-1)\cdots(i+1)(i,k)(i-1)\cdots(1)\\
&=&(j, l)\circ(i,k).
\end{eqnarray*}
\end{proof}

Note that $(j,l)$ and $(i,k)$ might not be disjoint, since Definition~\ref{def:block-interchange} 
allows for $j=k$ (hence the use of $\circ$ in the expression of $\overline{\beta(i,j,k,l)}$). 
We can now recover a known lower bound on the block-interchange distance, which is actually the exact distance as shown by \citet{christie-block}.

\begin{theorem}\label{thm:bid-lower-bound}
\cite{christie-block} For all $\pi$ in $S_n$, we have $bid(\pi)\geq\frac{n+1-c(\Gamma(\overline{\pi}))}{2}$.
\end{theorem}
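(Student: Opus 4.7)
The plan is to combine Theorem~\ref{thm:main-theorem} with Lemma~\ref{lemma:overline-beta-equals-two-crossing-two-cycles}, and then invoke the classical fact that the minimum number of transpositions in any factorisation of a permutation $\sigma\in S_{n+1}$ equals $n+1-c(\Gamma(\sigma))$.

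First, I would identify the set $\mathscr{C}$ of conjugacy classes of $S_{n+1}$ meeting the image of the set of block-interchanges under $f(\cdot)$. By Lemma~\ref{lemma:overline-beta-equals-two-crossing-two-cycles}, every block-interchange $\beta(i,j,k,l)$ is mapped to $(j,l)\circ(i,k)$, which is either a product of two disjoint $2$-cycles (when $j<k$) or a $3$-cycle (when $j=k$, since then $(j,l)$ and $(i,k)$ share exactly one element). Both types of permutations in $S_{n+1}$ can be written as a product of exactly two transpositions, so every element of $\mathscr{C}$ has this property.

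Next, fix a shortest sorting sequence for $\pi$, which, using the fact that the set of block-interchanges is symmetric, yields a factorisation $\pi=g_{t}\circ g_{t-1}\circ\cdots\circ g_1$ where $t=bid(\pi)$ and each $g_i$ is a block-interchange. By Theorem~\ref{thm:main-theorem}, this produces a factorisation of $\overline{\pi}$ into $t$ elements of $\mathscr{C}$. Since each element of $\mathscr{C}$ is itself a product of two transpositions of $S_{n+1}$, concatenating these expressions shows that $\overline{\pi}$ can be written as a product of $2t$ transpositions in $S_{n+1}$.

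Finally, I would apply the classical bound stating that any factorisation of a permutation $\sigma$ of an $m$-element set into transpositions uses at least $m-c(\Gamma(\sigma))$ factors (this follows from the observation that multiplying by a transposition changes the number of cycles by exactly $\pm 1$, together with the fact that $\overline{\iota}$ has $n+1$ cycles). Applied to $\overline{\pi}$, which sits in $S_{n+1}$, this gives $2t\geq n+1-c(\Gamma(\overline{\pi}))$, whence the desired inequality $bid(\pi)\geq\frac{n+1-c(\Gamma(\overline{\pi}))}{2}$. The argument is essentially routine; the only place that requires some attention is the case analysis in identifying $\mathscr{C}$, to make sure that $3$-cycles (arising from the degenerate case $j=k$) are not overlooked.
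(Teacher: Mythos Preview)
Your proposal is correct and follows essentially the same route as the paper: apply Theorem~\ref{thm:main-theorem} together with Lemma~\ref{lemma:overline-beta-equals-two-crossing-two-cycles} to turn a block-interchange factorisation of $\pi$ into a factorisation of $\overline{\pi}$ by pairs of exchanges, and then use the classical fact that the minimum number of transpositions needed to express a permutation of an $m$-set equals $m$ minus its number of cycles. The paper's proof is terser (it cites Jerrum for the length of a minimum factorisation into pairs of exchanges and does not separate the $j<k$ and $j=k$ cases explicitly), but the argument is the same.
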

\begin{proof}
By Theorem~\ref{thm:main-theorem} and Lemma~\ref{lemma:overline-beta-equals-two-crossing-two-cycles}, a lower bound on $bid(\pi)$ is given by the length of a minimum factorisation of $\overline{\pi}$ into the product of pairs of exchanges. Since this length equals $(n+1-c(\Gamma(\overline{\pi})))/2$ (see \eg \citet{jerrum-complexity}), the proof follows.
\end{proof}

Let us now characterise the image of a transposition by our mapping.

\begin{lemma}\label{lemma:overline-tau-equals-three-cycle}
For any transposition $\tau(i,j,l)$, we have $$\overline{\tau(i,j,l)}=(i,l,j).$$
\end{lemma}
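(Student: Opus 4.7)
The plan is to derive the identity as a direct specialisation of Lemma~\ref{lemma:overline-beta-equals-two-crossing-two-cycles}. By Definition~\ref{def:block-interchange} and the subsequent list of special cases, a transposition $\tau(i,j,l)$ is precisely the block-interchange $\beta(i,j,k,l)$ in which $k=j$. So the first step is simply to write
\[
\overline{\tau(i,j,l)} = \overline{\beta(i,j,j,l)} = (j,l)\circ(i,j),
\]
where the second equality invokes Lemma~\ref{lemma:overline-beta-equals-two-crossing-two-cycles} with $k$ replaced by $j$.

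The remaining step is a routine check that the product $(j,l)\circ(i,j)$ of two $2$-cycles (recall $\circ$ is applied right-to-left) equals the $3$-cycle $(i,l,j)$. Tracking the three relevant points: $i\mapsto j\mapsto l$ under $(i,j)$ then $(j,l)$, $j\mapsto i\mapsto i$, and $l\mapsto l\mapsto j$, while every other element is fixed by both factors. This is exactly the action of $(i,l,j)$, finishing the proof.

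There is no real obstacle here; the only thing to be careful about is the composition convention (right-to-left), so that one does not mistakenly produce $(i,j,l)$ instead of $(i,l,j)$. As a sanity check, the conjugacy class of $\overline{\tau(i,j,l)}$ should consist of $3$-cycles, which matches the fact that transpositions act as the ``simplest nontrivial even perturbations'' of the cycle structure under the bijection $f(\cdot)$, and prefigures the role these $3$-cycles will play in the forthcoming lower bounds on $td(\pi)$ and $ptd(\pi)$.
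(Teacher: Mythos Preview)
Your proof is correct and follows exactly the same approach as the paper: identify $\tau(i,j,l)$ with $\beta(i,j,j,l)$, apply Lemma~\ref{lemma:overline-beta-equals-two-crossing-two-cycles} to obtain $(j,l)\circ(i,j)$, and simplify to $(i,l,j)$. The explicit element-tracking you include is a nice sanity check but is not in the paper, which simply states the final equality.
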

\begin{proof}
As noted in Section~\ref{sec:intro-gr-prefix}, we have $\tau(i,j,l)=\beta(i,j,j,l)$; Lemma~\ref{lemma:overline-beta-equals-two-crossing-two-cycles} yields
$$\overline{\tau(i,j,l)}=\overline{\beta(i,j,j,l)}=(j,l)\circ(i,j)=(i,l,j).$$
\end{proof}

We can now recover the following known lower bound on the transposition distance. Recall that $c_{odd}(\Gamma(\overline{\pi}))$ denotes the number of odd cycles in $\Gamma(\overline{\pi})$.

\begin{theorem}\label{thm:bafna-pevzner-lower-bound}
\cite{bafna-transpositions} For all $\pi$ in $S_n$, we have $td(\pi)\geq\frac{n+1-c_{odd}(\Gamma(\overline{\pi}))}{2}$.
\end{theorem}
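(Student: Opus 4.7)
The plan is to mirror the proof of Theorem~\ref{thm:bid-lower-bound} line by line, with Lemma~\ref{lemma:overline-tau-equals-three-cycle} playing the role of Lemma~\ref{lemma:overline-beta-equals-two-crossing-two-cycles}. By Lemma~\ref{lemma:overline-tau-equals-three-cycle}, every transposition $\tau(i,j,l)$ is mapped by $f(\cdot)$ to a $3$-cycle in $S_{n+1}$, and since all $3$-cycles of $S_{n+1}$ form a single conjugacy class $\mathscr C$, Theorem~\ref{thm:main-theorem} applies with this $\mathscr C$. Thus any factorisation of $\pi$ into $t$ transpositions yields a factorisation of $\overline{\pi}$ into $t$ elements of $\mathscr C$, that is, into $t$ $3$-cycles of $S_{n+1}$. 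Consequently, $td(\pi)$ is at least the minimum length of a factorisation of $\overline{\pi}$ into $3$-cycles.

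The remaining step is to identify this minimum with $(n+1 - c_{odd}(\Gamma(\overline{\pi})))/2$, via the classical fact that for any even permutation $\sigma \in A_N$, the minimum number of $3$-cycles whose product equals $\sigma$ is $(N - c_{odd}(\Gamma(\sigma)))/2$. I would obtain this from two standard ingredients. First, multiplying a permutation by a single $3$-cycle alters $c_{odd}$ by an amount in $\{-2,0,+2\}$: one analyses the three possible ways a $3$-cycle $(a,b,c)$ can interact with the current cycle decomposition (its three points lie in one, two, or three cycles), and in each case one checks that the number of odd cycles changes by at most $2$. Since the identity satisfies $c_{odd}(\Gamma(\iota)) = n+1$, this gives the lower bound. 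Second, the bound is tight: one can explicitly factor every $(2k+1)$-cycle into $k$ $3$-cycles, and pair up the even cycles of $\sigma$ (their number is even because $\sigma \in A_N$) to factor any two even cycles of lengths $2a$ and $2b$ jointly as a product of $a+b$ $3$-cycles.

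The main obstacle is really only this classical $3$-cycle distance formula — once that is in hand, specialising it to $\sigma = \overline{\pi}$ and $N = n+1$ immediately yields the theorem, just as the analogous $2$-cycle formula of \citet{jerrum-complexity} gave Theorem~\ref{thm:bid-lower-bound}. As with the block-interchange case, the power of Theorem~\ref{thm:main-theorem} lies in reducing an edit-distance question about $\pi$ to a purely group-theoretic conjugacy-class factorisation question about $\overline{\pi}$, so that the same well-known counting argument recovers Bafna and Pevzner's bound essentially for free.
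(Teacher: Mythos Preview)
Your proposal is correct and follows exactly the same route as the paper: apply Theorem~\ref{thm:main-theorem} together with Lemma~\ref{lemma:overline-tau-equals-three-cycle} to reduce the question to the minimum factorisation of $\overline{\pi}$ into $3$-cycles, then invoke the classical formula $(n+1-c_{odd}(\Gamma(\overline{\pi})))/2$ for that minimum. The only difference is cosmetic: the paper simply cites \citet{jerrum-complexity} for the $3$-cycle distance formula, whereas you sketch its proof.
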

\begin{proof}
By Theorem~\ref{thm:main-theorem} and Lemma~\ref{lemma:overline-tau-equals-three-cycle}, a lower bound on $td(\pi)$ is given by the length of a minimum factorisation of $\overline{\pi}$ into the product of $3$-cycles. Since this length equals $(n+1-c_{odd}(\Gamma(\overline{\pi})))/2$ (see \eg \citet{jerrum-complexity}), the proof follows.
\end{proof}


\section{New results on the prefix transposition distance}\label{sec:sbpt}

\citet{dias-prefix} initiated the study of sorting by prefix transpositions, and derived a lower bound on the corresponding distance using the following concepts. 

\begin{definition}\label{def:ptb}
Given a permutation $\pi$ in $S_n$, build the permutation $\widetilde\pi=\langle 0\ \pi_1\ \cdots\ \pi_n$ $n+1\rangle$; a pair $(\widetilde\pi_i,\widetilde\pi_{i+1})$ with $0\leq i\leq n$ is a \emph{prefix transposition breakpoint} if $\widetilde\pi_{i+1}\neq\widetilde\pi_i+1$ or if $i=0$, and an \emph{adjacency} otherwise. 
\end{definition}

The number of prefix transposition breakpoints of $\pi$ is denoted by $ptb(\pi)$.  Noting that a prefix transposition can create at most two adjacencies and that $\iota$ is the only permutation with one prefix transposition breakpoint, \citeauthor{dias-prefix} obtained the following lower bound.

\begin{lemma}\label{lemma:prefix-transposition-breakpoint-lower-bound}
\cite{dias-prefix} For any $\pi$ in $S_n$: 
\begin{eqnarray}\label{eqn:dias-meidanis-ptd-lb}
ptd(\pi)\geq \left\lceil\frac{ptb(\pi)-1}{2}\right\rceil.
\end{eqnarray}
\end{lemma}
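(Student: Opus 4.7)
The plan is to combine two simple observations via a counting argument. First, I would verify that $\iota$ is the unique permutation in $S_n$ with $ptb(\pi) = 1$. One direction is immediate: $\widetilde\iota = \langle 0\ 1\ 2\ \cdots\ n\ n+1\rangle$ has only the forced breakpoint at position $i = 0$, since $\widetilde\iota_{i+1} = \widetilde\iota_i + 1$ for every $i \geq 1$. For the converse, if every pair at positions $1 \leq i \leq n$ were an adjacency, iterating $\widetilde\pi_{i+1} = \widetilde\pi_i + 1$ together with $\widetilde\pi_{n+1} = n + 1$ would force $\pi = \iota$. Hence $\pi \neq \iota$ implies $ptb(\pi) \geq 2$.

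The core step is to show that any single prefix transposition $\tau(1,j,l)$ decreases $ptb$ by at most $2$. For this I would write out the image of $\pi$ under the operation, namely
$$
\langle 0\ \pi_j\ \cdots\ \pi_{l-1}\ \pi_1\ \cdots\ \pi_{j-1}\ \pi_l\ \cdots\ \pi_n\ n+1\rangle,
$$
and compare it to $\widetilde\pi$ position by position. Only three pairs can differ between the two sequences, namely those at positions $0$, $j-1$, and $l-1$; every other position retains the same pair of consecutive values. However, the pair at position $0$ remains a breakpoint both before and after the operation by the $i = 0$ clause of Definition~\ref{def:ptb}, so only the pairs at positions $j-1$ and $l-1$ can change status. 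Consequently, each prefix transposition destroys at most two breakpoints (and, symmetrically, creates at most two).

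Combining the two ingredients, if $ptd(\pi) = t$, then applying an optimal sorting sequence of $t$ prefix transpositions to bring $\pi$ to $\iota$ yields
$$
1 \;=\; ptb(\iota) \;\geq\; ptb(\pi) - 2t,
$$
so $t \geq (ptb(\pi) - 1)/2$, and integrality of $t$ gives $t \geq \lceil (ptb(\pi) - 1)/2 \rceil$. The argument presents no real obstacle; the one place where care is warranted is the case analysis identifying exactly which positions of $\widetilde\pi$ are altered by $\tau(1,j,l)$ and verifying that the position-$0$ pair remains a breakpoint, which is what rules out a hypothetical drop of $3$ in $ptb$ per operation.
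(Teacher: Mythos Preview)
Your proposal is correct and matches the paper's own justification, which merely cites Dias and Meidanis and records the same two facts you prove: a prefix transposition creates at most two adjacencies, and $\iota$ is the unique permutation with $ptb=1$. One minor wording issue worth tightening: after applying $\tau(1,j,l)$, every position from $1$ through $l-1$ actually holds a different pair than before, so it is not literally true that ``every other position retains the same pair of consecutive values''; the correct statement is that the \emph{multiset} of consecutive pairs changes in exactly three places, and since the pair at position~$0$ is a breakpoint by definition both before and after, at most two pairs can switch from breakpoint to adjacency.
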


\citet{chitturi-bounding} later obtained another lower bound on the prefix transposition distance. They used the following concepts, based on permutations of $\{0,1,2,\ldots,n-1\}$ rather than $\{1,2,\ldots,n\}$. 

\begin{definition}
For a permutation $\pi$ of $\{0,1,2,\ldots,n-1\}$, an ordered pair $(\pi_i, \pi_{i+1})$ is an \emph{anti-adjacency} if $\pi_{i+1}=\pi_i-1\pmod{n}$. A \emph{strip} in a permutation $\pi$ is a maximal interval of $\pi$ that contains only adjacencies, and a \emph{clan} is a maximal interval of $\pi$ that contains only anti-adjacencies.  
\end{definition}

\citeauthor{chitturi-bounding} proved the following lower bound.

\begin{lemma}\label{lemma:chitturi-and-sudborough-lower-bound}
\cite{chitturi-bounding} For any permutation $\pi$ of $\{0,1,2,\ldots,n-1\}$, let $\Upsilon(\pi)$ denote the set of all clans of $\pi$ of length at least $3$, and $s(\pi)$ denote the number of strips of $\pi$. 
Then 
\begin{eqnarray}\label{eqn:chitturi-sudborough-ptd-lb}
ptd(\pi)\geq \frac{s(\pi)+\frac{\sum_{C\in\Upsilon(\pi)}\left(|C|-2\right)}{3}}{2}.
\end{eqnarray}
\end{lemma}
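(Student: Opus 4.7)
The plan is to prove the lemma by induction on $ptd(\pi)$, showing that a single prefix transposition changes the quantity
$$\Phi(\pi) := s(\pi) + \tfrac{1}{3}\sum_{C\in\Upsilon(\pi)}(|C|-2)$$
by at most $2$ in the direction of decrease, and verifying $\Phi(\iota)=0$ under the Chitturi--Sudborough conventions; the desired bound $ptd(\pi)\geq \Phi(\pi)/2$ then follows from the induction.

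The key starting observation is a locality property. A prefix transposition $\tau(1,j,l)$ alters only four consecutive pairs of $\pi$, namely the two deleted pairs $(\pi_{j-1},\pi_j)$ and $(\pi_{l-1},\pi_l)$, and the two created pairs $(\pi_{l-1},\pi_1)$ and $(\pi_{j-1},\pi_l)$; every other consecutive pair is unchanged. Hence the decomposition of $\pi$ into strips and clans can only be disturbed at these four locations, and the change $\Delta\Phi$ is completely determined by the types (adjacency, anti-adjacency, or neither) taken by each of these four pairs, before and after the operation.

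I would then carry out a case analysis over the possible combinations of pair types, using the following per-pair estimates. A created adjacency can decrease $s$ by at most $1$, by merging its two endpoints' strips into a single strip. A destroyed anti-adjacency can decrease $\sum_{C\in\Upsilon}(|C|-2)$ by at most $2$: if the pair lies inside a clan of length $a+b$, splitting the clan into pieces of lengths $a$ and $b$ changes the relevant contribution by $(a+b-2)-\max(0,a-2)-\max(0,b-2)$, which is at most $2$ in every case (equal to $2$ precisely when both pieces remain in $\Upsilon$, and strictly smaller otherwise).

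The principal obstacle is that combining these individual worst-case contributions additively yields $\Delta\Phi\geq -2-4/3$, which is too weak. To tighten the bound to $\Delta\Phi\geq -2$ one must exploit the coupling between the two kinds of change: the arithmetic conditions forcing both created pairs to be adjacencies (so that $s$ drops by the full $2$) force the five values $\pi_1,\pi_{j-1},\pi_j,\pi_{l-1},\pi_l$ to be congruent, modulo $n$, to five consecutive integers, which in turn prevents both destroyed anti-adjacencies from lying in the interior of a long clan and limits the clan-sum decrease accordingly. A systematic enumeration of the remaining subcases, organised by how many of the four affected pairs are of each type and by whether the two destroyed anti-adjacencies lie in the same clan or in different ones, then produces the uniform bound $\Delta\Phi\geq -2$ and closes the induction.
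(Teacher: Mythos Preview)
The paper does not prove this lemma at all: it is quoted verbatim from \cite{chitturi-bounding} and used only as a benchmark against which the authors' own lower bound (Theorem~\ref{thm:my-ptd-lower-bound-i}) is compared experimentally in Section~\ref{sec:ptd-experimental}. There is therefore nothing in the present paper to compare your argument to.

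That said, a couple of remarks on your sketch. The potential-function approach you outline --- define $\Phi$, show that each prefix transposition decreases it by at most $2$, and conclude --- is exactly the method used by Chitturi and Sudborough, so your high-level plan is right. However, what you have written is a plan rather than a proof: you correctly identify that the naive per-pair estimates only give $\Delta\Phi\geq -2-4/3$, you correctly identify that a coupling between the adjacency constraints and the clan structure must close the gap, but you do not actually carry out the enumeration. The sentence ``a systematic enumeration of the remaining subcases \ldots\ then produces the uniform bound'' is precisely where all the work lies, and the argument you sketch for why two created adjacencies preclude long-clan losses is not yet airtight (for instance, nothing you wrote rules out $\pi_{j+1}=\pi_j-1$, which would put the destroyed pair $(\pi_{j-1},\pi_j)$ inside a clan of length $\geq 3$ even under your consecutivity constraints). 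You should also recheck the base case: with the definitions as stated here, $s(\iota)=1$, not $0$, so either the Chitturi--Sudborough conventions differ subtly from the way they are restated in this paper, or the inequality is meant to be read with an implicit floor; either way your induction anchor needs to be stated carefully.
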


We will prove a new lower bound on the prefix transposition distance (Theorem~\ref{thm:my-ptd-lower-bound-i} page~\pageref{thm:my-ptd-lower-bound-i}), using our model and \citeauthor{akers-star}'s results~\cite{akers-star} on computing the prefix exchange distance:

\begin{theorem}\label{thm:formula-for-pexc}
\cite{akers-star} For any $\pi$ in $S_n$, we have
$$pexc(\pi)=
n+c(\Gamma(\pi))-2c_1(\Gamma(\pi))-\left\{
\begin{array}{ll}
0 & \mbox{if } \pi_1= 1, \\
2 & \mbox{otherwise},
\end{array}
\right.$$
where $c_1(\Gamma(\pi))$ denotes the number of $1$-cycles in $\Gamma(\pi)$, or equivalently the number of fixed points of $\pi$.
\end{theorem}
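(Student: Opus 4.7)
The plan is to prove the formula via matching upper and lower bounds, based on the fact that right-multiplying $\pi$ by a transposition $(1, k)$ (the algebraic form of a prefix exchange) either splits a cycle of $\Gamma(\pi)$ containing both $1$ and $k$ into two cycles (one containing $1$ and one containing $k$), or merges the cycles containing $1$ and $k$ into one cycle, according to whether these two elements lie in the same cycle or not.

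For the \emph{upper bound}, I would exhibit an explicit sorting procedure. A cycle of length $m$ containing $1$ can be unwound into $m$ fixed points using $m-1$ prefix exchanges: repeatedly applying $\varepsilon(1, \pi_1)$ fixes one new element and shortens $1$'s cycle by one at each step. A cycle of length $m$ not containing $1$, assuming $1$ is already a fixed point, costs $m+1$ prefix exchanges: one to bring $1$ into the cycle and $m$ more to unwind the resulting $(m+1)$-cycle. Summing over all non-trivial cycles gives $\sum_{C}(|C|+1) = (n - c_1) + (c - c_1) = n + c - 2c_1$ when $\pi_1 = 1$; when $\pi_1 \neq 1$, first unwinding the cycle containing $1$ (which costs $|C|-1$ rather than $|C|+1$) reduces to the previous case, saving $2$ and yielding $n + c - 2c_1 - 2$.

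For the \emph{lower bound}, I would introduce the potential $\phi(\pi) = n + c(\Gamma(\pi)) - 2c_1(\Gamma(\pi)) - \delta(\pi)$, with $\delta(\pi) = 0$ if $\pi_1 = 1$ and $2$ otherwise, which satisfies $\phi(\iota) = 0$. It suffices to prove that $\phi(\pi \circ \varepsilon) \geq \phi(\pi) - 1$ for every prefix exchange $\varepsilon$; then $pexc(\pi) \geq \phi(\pi)$ follows by applying the inequality along any sorting sequence. The argument proceeds by case analysis on whether $1$ and $k$ are fixed points of $\pi$, whether they lie in the same cycle, and (in the split case) whether the resulting cycles have length $1$, tracking the changes in $c$, $c_1$, and $\delta$ in each subcase. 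The main obstacle will be the careful bookkeeping, particularly in the subcases where the exchange causes $\pi_1$ to become (or stop being) $1$, thereby simultaneously affecting both the $c_1$ and $\delta$ terms; the coefficients of $\phi$ are calibrated precisely so that these effects cancel and always yield $\Delta\phi \in \{-1, +1\}$, matching the upper bound exactly.
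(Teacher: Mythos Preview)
The paper does not actually prove this theorem: it is quoted verbatim from \cite{akers-star} and used as a black box to derive Lemma~\ref{lemma:distance-using-prefix-cycles-of-length-three}. So there is no in-paper argument to compare your proposal against.

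That said, your proposal is a correct and essentially the standard proof of Akers and Krishnamurthy's formula. The upper bound is exactly the greedy routing algorithm on the star graph (repeatedly send $\pi_1$ home; when $\pi_1=1$, move it into an unsorted cycle), and your cost accounting $\sum_C(|C|+1)$ over nontrivial cycles, adjusted by $-2$ when $1$ is already in a nontrivial cycle, is right. For the lower bound, the potential $\phi(\pi)=n+c(\Gamma(\pi))-2c_1(\Gamma(\pi))-\delta(\pi)$ does satisfy $\phi(\iota)=0$ and $\Delta\phi\in\{-1,+1\}$ under any prefix exchange; the case analysis you sketch (same cycle vs.\ different cycles, tracking whether $1$ or $k$ becomes or ceases to be a fixed point, and the induced change in $\delta$) goes through without surprises. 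The only thing to be careful about when you write it out is that right-multiplication by $(1,k)$ acts on \emph{positions}, so the relevant dichotomy is whether positions $1$ and $k$ lie in the same cycle of $\pi$; your informal use of ``$1$'' and ``$k$'' as elements is harmless here because the split/merge behaviour and the fixed-point bookkeeping are identical, but it is worth stating explicitly to avoid confusion.
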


\subsection{An improved lower bound}

Using our theory, we prove a \emph{new} lower bound on $ptd(\pi)$ and show that it always outperforms~\eqref{eqn:dias-meidanis-ptd-lb}. We will find it convenient to express $ptb(\pi)$ as follows.

\begin{lemma}\label{lemma:another-expression-for-ptb}
For any $\pi$ in $S_n$, we have
$$ptb(\pi)=n+1-c_1(\Gamma(\overline{\pi}))+\left\{
\begin{array}{ll}
1 & \mbox{if } \pi_1=1, \\
0 & \mbox{otherwise}.
\end{array}
\right.$$
\end{lemma}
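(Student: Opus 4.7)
My plan is to prove the identity by explicitly identifying the fixed points of $\overline{\pi}$ and matching them with the adjacencies of $\widetilde{\pi}$. The starting point is the formula $\overline{\pi} = (0,1,2,\ldots,n)\circ(0,\pi_n,\pi_{n-1},\ldots,\pi_1)$. I would first compute $\overline{\pi}(x)$ directly on each of the $n+1$ elements of $\{0,1,\ldots,n\}$. Applying the right factor then the left one yields
$$
\overline{\pi}(0)=\pi_n+1\pmod{n+1},\qquad \overline{\pi}(\pi_1)=1,\qquad \overline{\pi}(\pi_i)=\pi_{i-1}+1\pmod{n+1}\text{ for }2\le i\le n.
$$

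From these formulas I would read off the fixed points of $\overline{\pi}$. Since $\pi_{i-1}\in\{1,\ldots,n\}$ for $2\le i\le n$, the congruence $\pi_i\equiv\pi_{i-1}+1$ reduces to equality in $\mathbb{Z}$; so for such $i$, the element $\pi_i$ is a fixed point iff $\pi_i=\pi_{i-1}+1$. The element $\pi_1$ is a fixed point iff $\pi_1=1$, and $0$ is a fixed point iff $\pi_n=n$ (since only $\pi_n=n$ makes $\pi_n+1\equiv 0\pmod{n+1}$). These three cases are disjoint and exhaust all vertices, hence
$$
c_1(\Gamma(\overline{\pi}))=|\{i:2\le i\le n,\ \pi_i=\pi_{i-1}+1\}|+[\pi_1=1]+[\pi_n=n],
$$
where $[\,\cdot\,]$ denotes the Iverson bracket.

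Next I would count $ptb(\pi)$ directly from Definition~\ref{def:ptb}. There are $n+1$ pairs $(\widetilde{\pi}_i,\widetilde{\pi}_{i+1})$ with $0\le i\le n$; the pair at $i=0$ is always a breakpoint, the pairs for $1\le i\le n-1$ are adjacencies precisely when $\pi_{i+1}=\pi_i+1$, and the pair at $i=n$ is an adjacency precisely when $\pi_n=n$ (since $\widetilde{\pi}_{n+1}=n+1$). Therefore
$$
ptb(\pi)=(n+1)-|\{i:1\le i\le n-1,\ \pi_{i+1}=\pi_i+1\}|-[\pi_n=n].
$$

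Finally I would combine the two expressions. Re-indexing the set on the right-hand side above as $|\{i:2\le i\le n,\ \pi_i=\pi_{i-1}+1\}|$ and substituting into the formula for $c_1(\Gamma(\overline{\pi}))$ gives
$$
c_1(\Gamma(\overline{\pi}))=\bigl((n+1)-ptb(\pi)-[\pi_n=n]\bigr)+[\pi_1=1]+[\pi_n=n]=n+1-ptb(\pi)+[\pi_1=1],
$$
which rearranges to the claimed identity. The only subtle point — and the main place one must be careful — is the asymmetric treatment of the two ends of the permutation: the fixed point at $0$ corresponds bijectively to the adjacency at position $n$, but the possible fixed point at $\pi_1$ has no breakpoint counterpart and is precisely what contributes the extra $+1$ when $\pi_1=1$.
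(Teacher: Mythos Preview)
Your proof is correct and follows essentially the same approach as the paper: both arguments rest on the bijection between adjacencies of $\widetilde{\pi}$ and fixed points of $\overline{\pi}$, with the correction term arising from the special status of the pair $(0,\pi_1)$ when $\pi_1=1$. The paper merely sketches this observation in one sentence, whereas you carry out the explicit computation of $\overline{\pi}(x)$ and match the counts case by case; the extra detail is welcome but the underlying idea is identical.
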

\begin{proof}
The formula results from the observation that, among the $n+1$ pairs of adjacent elements in $\widetilde\pi$, each adjacency in $\widetilde\pi$ gives rise to a $1$-cycle in $\Gamma(\overline{\pi})$, and from the fact that if $\pi_1=1$, then we counted the $1$-cycle that corresponds to $(0,1)$ as an adjacency, which is contrary to Definition~\ref{def:ptb} and which we correct by adding $1$.
\end{proof}

As explained in Section~\ref{sec:recovery}, we can obtain a lower bound on the prefix transposition distance by characterising the image of a prefix transposition by $f(\cdot)$ and computing the associated distance. We already know that transpositions are mapped onto $3$-cycles (see Lemma~\ref{lemma:overline-tau-equals-three-cycle} page~\pageref{lemma:overline-tau-equals-three-cycle}); in the case of prefix transpositions, it is easily seen that these $3$-cycles will always contain element $0$. Therefore, we need to be able to compute the length of a minimum factorisation of $\pi$ in $S_n$ into a product of $3$-cycles, where each $3$-cycle in the factorisation is further required to contain the first element. Let us denote the corresponding distance $d_3^1(\pi)$; the following result shows how to compute it.

\begin{lemma}\label{lemma:distance-using-prefix-cycles-of-length-three}
For any $\pi$ in $A_n$, we have
$$d^1_3(\pi)=\frac{n+c(\Gamma(\pi))}{2}-c_1(\Gamma(\pi))-\left\{
\begin{array}{ll}
0 & \mbox{if } \pi_1= 1, \\
1 & \mbox{otherwise}.
\end{array}
\right.$$
\end{lemma}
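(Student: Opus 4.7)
The plan is to establish the exact identity $d_3^1(\pi) = \tfrac{1}{2}pexc(\pi)$ for every $\pi \in A_n$, and then read off the formula from Theorem~\ref{thm:formula-for-pexc}. The bridge between the two distances is the elementary identity $(1,a,b) = (1,b)\circ(1,a)$, valid for distinct $a,b \neq 1$. Dually, for $i, j \neq 1$, the product $(1,i)\circ(1,j)$ of two prefix exchanges equals $\iota$ if $i=j$ and equals the $3$-cycle $(1,j,i)$ otherwise; in particular it is always either trivial or a $3$-cycle containing $1$. So the $3$-cycles containing $1$ are precisely the non-identity products of two prefix exchanges, which also confirms that they generate $A_n$ and that $d_3^1(\pi)$ is well-defined exactly when $\pi$ is even.

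For the lower bound $d_3^1(\pi) \geq pexc(\pi)/2$, any factorisation $\pi = t_1 \circ \cdots \circ t_k$ into $3$-cycles containing $1$ is converted, by replacing each $t_i$ with its pair of prefix exchanges via the identity above, into a factorisation of $\pi$ into $2k$ prefix exchanges; hence $pexc(\pi) \leq 2k$. For the upper bound $d_3^1(\pi) \leq pexc(\pi)/2$, first observe that since prefix exchanges are odd permutations and $\pi \in A_n$, every factorisation of $\pi$ into prefix exchanges has even length, so in particular $pexc(\pi)$ is even. Taking an optimal factorisation $\pi = s_1 \circ s_2 \circ \cdots \circ s_{pexc(\pi)}$ and grouping consecutive pairs $(s_{2i-1}\circ s_{2i})$, each pair collapses to either $\iota$ or to a $3$-cycle containing $1$; discarding the trivial pairs delivers a factorisation of $\pi$ into at most $pexc(\pi)/2$ such $3$-cycles.

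Combining the two inequalities yields $d_3^1(\pi) = pexc(\pi)/2$. Substituting the expression for $pexc(\pi)$ from Theorem~\ref{thm:formula-for-pexc} and dividing through by $2$ gives exactly the claimed formula, with the $\{0,2\}$ correction term in the prefix exchange formula becoming the $\{0,1\}$ correction term in the statement.

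I do not expect a real obstacle here: the whole argument is bookkeeping around the identity $(1,a,b) = (1,b)\circ(1,a)$ together with the parity observation that $pexc(\pi)$ is even when $\pi \in A_n$. The only mildly delicate point is the pairing step in the upper bound, which tacitly needs this evenness so that no stray unpaired prefix exchange remains; without $\pi \in A_n$ the statement would in any case be vacuous, as products of $3$-cycles can only produce even permutations.
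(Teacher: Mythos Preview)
Your proof is correct and follows essentially the same approach as the paper: both establish $d_3^1(\pi)=pexc(\pi)/2$ via the identity $(1,j)\circ(1,i)=(1,i,j)$, converting factorisations in both directions, and then invoke Theorem~\ref{thm:formula-for-pexc}. You are slightly more explicit than the paper about why $pexc(\pi)$ is even and about the degenerate case where a paired product $(1,i)\circ(1,j)$ collapses to $\iota$ (a case that cannot actually arise in a \emph{minimum} prefix-exchange factorisation, which is why the paper silently ignores it), but these are refinements of the same argument rather than a different route.
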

\begin{proof}
Given a minimum factorisation of length $\ell$ of an even permutation $\pi$ into prefix exchanges, we can construct a sequence of $\ell/2$
$3$-cycles by noting that $(1,j)\circ(1,i)=(1,i,j)$. Therefore $d^1_3(\pi)\leq \ell/2$. On the other hand, assume there exists a shorter sequence of $3$-cycles acting on the first element whose product is $\pi$; then one can split each of these $3$-cycles into two prefix exchanges using the relation above and find a shorter expression for $\pi$ as a product of prefix exchanges, a contradiction. The result follows from Theorem~\ref{thm:formula-for-pexc}.
\end{proof}

As a corollary, we obtain the following new lower bound on the prefix transposition distance.

\begin{theorem}\label{thm:my-ptd-lower-bound-i}
For any $\pi$ in $S_n$, we have
\begin{eqnarray}\label{eqn:my-ptd-lower-bound-i}
ptd(\pi)\geq\frac{n+1+c(\Gamma(\overline{\pi}))}{2}-c_1(\Gamma(\overline{\pi}))-\left\{
\begin{array}{ll}
0 & \mbox{if } \pi_1= 1, \\
1 & \mbox{otherwise}.
\end{array}
\right.
\end{eqnarray}
\end{theorem}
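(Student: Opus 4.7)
The plan is to derive the theorem as a direct corollary of Lemma~\ref{lemma:distance-using-prefix-cycles-of-length-three} applied to $\overline{\pi}\in A_{n+1}$, following the framework of Section~\ref{sec:general-lower-bounding-technique}. The first step is to identify the image $f(S)$ of the set $S$ of prefix transpositions: specialising Lemma~\ref{lemma:overline-tau-equals-three-cycle} to $i=1$ yields $\overline{\tau(1,j,l)}=(1,l,j)$, so $f(S)$ is contained in the set of $3$-cycles of $A_{n+1}$ that contain the element $1$. Following the general scheme of Section~\ref{sec:recovery}, any sorting sequence of length $t$ for $\pi$ should then induce a factorisation of $\overline{\pi}$ into $t$ such $3$-cycles, yielding $ptd(\pi)\geq d^1_3(\overline{\pi})$.

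The second step is to evaluate $d^1_3(\overline{\pi})$ by substituting $\overline{\pi}$ into Lemma~\ref{lemma:distance-using-prefix-cycles-of-length-three}. Since $\overline{\pi}$ permutes $n+1$ elements, the symbol ``$n$'' of the lemma becomes $n+1$, producing the first two summands of the announced bound. To match the bracket one needs $\overline{\pi}(1)=1$ if and only if $\pi_1=1$: unfolding $\overline{\pi}=(0,1,\ldots,n)\circ(0,\pi_n,\ldots,\pi_1)$ and tracing the image of $1$ through both cycles, one finds that if $\pi_1=1$ then $1\to 0\to 1$, while if $\pi_a=1$ with $a\geq 2$ then $1\to\pi_{a-1}\to\pi_{a-1}+1\pmod{n+1}$, which cannot equal $1$ because $\pi_{a-1}\in\{1,\ldots,n\}$. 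Combining the two steps then produces exactly the claimed formula.

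The delicate point requiring the most care is the first step. Theorem~\ref{thm:main-theorem} as stated only guarantees a factorisation of $\overline{\pi}$ into elements of the full $S_{n+1}$-conjugacy class of $(1,l,j)$, which is the entire class of $3$-cycles, rather than the narrower set $S'$ of $3$-cycles through $1$; and the conjugations $g_i'\mapsto g_t\circ g_i'\circ g_t^{-1}$ appearing in the proof of that theorem need not preserve ``contains $1$'' because a prefix transposition $g_t$ does not fix $1$. I would resolve this either by tightening the inductive argument of Theorem~\ref{thm:main-theorem} in the prefix setting, or by replacing it with a direct potential argument: using Lemma~\ref{lemma:value-of-overline-brack-pi-circ-sigma-brack} to write $\overline{\pi\circ\tau}=\overline{\pi}\circ\overline{\tau}^{\pi}$, one shows by a case analysis on how the $3$-cycle $\overline{\tau}^{\pi}$ interacts with the cycle decomposition of $\overline{\pi}$ that a single prefix transposition can decrease the right-hand side of the announced inequality by at most one, and iterating from $\pi$ down to $\iota$ then yields the bound.
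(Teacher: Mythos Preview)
Your proposal is correct and matches the paper's route; the paper's own proof of this theorem is the single sentence ``Follows immediately from Theorem~\ref{thm:main-theorem} and Lemma~\ref{lemma:distance-using-prefix-cycles-of-length-three}.'' You are in fact being more careful than the paper: Theorem~\ref{thm:main-theorem} as literally stated only produces a factorisation of $\overline{\pi}$ into elements of the full conjugacy class of $3$-cycles, so the passage from it to $d^1_3(\overline{\pi})$ is not automatic, and the paper glosses over exactly the point you flag. Your first proposed fix goes through cleanly: in the inductive step one has $\overline{\pi}=\overline{g_t}\circ\overline{\sigma}^{\,g_t}$, and if by hypothesis each factor of $\overline{\sigma}$ is a $3$-cycle through $1$, then each factor $d_i$ of $\overline{\sigma}^{\,g_t}$ is a $3$-cycle through $g_t(1)=j_t$; since $\overline{g_t}=(1,l_t,j_t)$ sends $j_t$ to $1$, the rewriting
\[
\overline{g_t}\circ d_{t-1}\circ\cdots\circ d_1
= d_{t-1}^{\,\overline{g_t}}\circ\cdots\circ d_1^{\,\overline{g_t}}\circ\overline{g_t}
\]
expresses $\overline{\pi}$ as a product of $t$ $3$-cycles each containing $1$. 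Your verification that $\overline{\pi}(1)=1$ if and only if $\pi_1=1$ is correct and is precisely what is needed to match the bracket in Lemma~\ref{lemma:distance-using-prefix-cycles-of-length-three}; incidentally, this also shows that the paper's passing reference to ``element $0$'' just before that lemma should read ``element $1$'', consistently with Lemma~\ref{lemma:overline-tau-equals-three-cycle}.
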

\begin{proof}
Follows immediately from Theorem~\ref{thm:main-theorem} and Lemma~\ref{lemma:distance-using-prefix-cycles-of-length-three}.
\end{proof}

An immediate question is how tight this new lower bound actually is. We will answer this question experimentally in Section~\ref{sec:ptd-experimental}, where we will see that many more permutations are tight with respect to our new result than with respect to the previously known lower bounds. We will in the meantime conclude this section by proving that our lower bound always outperforms Dias and Meidanis' (given by Lemma~\ref{lemma:prefix-transposition-breakpoint-lower-bound}).

\begin{theorem}\label{thm:tightness-of-my-lower-bound-on-ptd}
For all $\pi$ in $S_n$, the value of lower bound~\eqref{eqn:my-ptd-lower-bound-i} always exceeds that of lower bound~\eqref{eqn:dias-meidanis-ptd-lb}.
\end{theorem}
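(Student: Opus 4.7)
The strategy is to rewrite the Dias--Meidanis bound~\eqref{eqn:dias-meidanis-ptd-lb} using Lemma~\ref{lemma:another-expression-for-ptb}, so that both bounds are expressed through the same few quantities: $n$, $c := c(\Gamma(\overline{\pi}))$, $c_1 := c_1(\Gamma(\overline{\pi}))$, and an indicator of whether $\pi_1 = 1$. After expanding the ceiling into two cases based on the parity of its argument, the difference between the two bounds simplifies algebraically to an expression of the form $(c - c_1 - r)/2$, where $r \in \{0, 1, 2\}$ depends on whether $\pi_1 = 1$ and on that parity. The inequality to be proved therefore reduces to $c - c_1 \geq r$ in each of four sub-cases.

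The quantity $c - c_1$ is precisely the number of non-trivial cycles of $\overline{\pi}$, and is therefore always non-negative, which immediately settles the $r = 0$ sub-case. For the two sub-cases with $r = 1$, the extremal case $c - c_1 = 0$ forces $\overline{\pi} = \iota$ and hence $\pi = \iota$: in the sub-case $\pi_1 = 1$ this is excluded by the parity hypothesis (since $\pi = \iota$ would give $n+1-c_1 = 0$, which is of the wrong parity), while in the sub-case $\pi_1 \neq 1$ it is excluded directly by that hypothesis.

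The main obstacle is the single sub-case requiring $r = 2$, which arises when $\pi_1 \neq 1$ and $n - c_1$ is odd. Here I would exploit the fact that $\overline{\pi}$ belongs to $A_{n+1}$: if $\overline{\pi}$ had exactly one non-trivial cycle, that cycle would have length $n + 1 - c_1$, and evenness of $\overline{\pi}$ would force this length to be odd; but $n - c_1$ odd means $n + 1 - c_1$ is even, a contradiction. Combined with $c - c_1 \geq 1$ (established as in the $r = 1$ sub-case, since $\pi_1 \neq 1$ forbids $\pi = \iota$), this rules out $c - c_1 = 1$ and yields $c - c_1 \geq 2$. Putting the four sub-cases together shows that~\eqref{eqn:my-ptd-lower-bound-i} is always at least as large as~\eqref{eqn:dias-meidanis-ptd-lb}, with equality attainable, \eg at $\pi = \iota$.
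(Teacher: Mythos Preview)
Your proof is correct and uses the same ingredients as the paper's: Lemma~\ref{lemma:another-expression-for-ptb} to rewrite~\eqref{eqn:dias-meidanis-ptd-lb}, the observation that $c(\Gamma(\overline{\pi})) - c_1(\Gamma(\overline{\pi})) \geq 1$ whenever $\pi \neq \iota$, and the evenness of $\overline{\pi}$ to supply the needed parity constraint. The only difference is organisational: the paper splits into just the two cases $\pi_1 = 1$ and $\pi_1 \neq 1$ and, in the latter, uses $n+1 \equiv c(\Gamma(\overline{\pi})) \pmod 2$ to rewrite~\eqref{eqn:my-ptd-lower-bound-i} as a ceiling and then applies monotonicity of $\lceil\cdot\rceil$, whereas you split further on the parity of the ceiling's argument and phrase the same parity fact as a cycle-length contradiction in the $r=2$ sub-case.
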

\begin{proof}
Assume $\pi\neq\iota$ (otherwise the result trivially holds); this implies that $\Gamma(\overline{\pi})$ has at least one cycle of length at least $2$, which means that $c(\Gamma(\overline{\pi}))-c_1(\Gamma(\overline{\pi}))\geq 1$. There are two cases to prove: if $\pi_1=1$, then lower bound~\eqref{eqn:dias-meidanis-ptd-lb} becomes
$$\left\lceil\frac{(n+1-c_1(\Gamma(\overline{\pi}))+1)-1}{2}\right\rceil=\left\lceil\frac{n+1-c_1(\Gamma(\overline{\pi}))}{2}\right\rceil,$$
and lower bound~\eqref{eqn:my-ptd-lower-bound-i} satisfies
\begin{eqnarray*}
\frac{n+1+c(\Gamma(\overline{\pi}))-2c_1(\Gamma(\overline{\pi}))}{2}&\geq&\frac{n+2-c_1(\Gamma(\overline{\pi}))}{2} \geq\left\lceil\frac{n+1-c_1(\Gamma(\overline{\pi}))}{2}\right\rceil.
\end{eqnarray*}
On the other hand, if $\pi_1\neq 1$, then lower bound~\eqref{eqn:dias-meidanis-ptd-lb} becomes
$$\left\lceil\frac{(n+1-c_1(\Gamma(\overline{\pi})))-1}{2}\right\rceil=\left\lceil\frac{n-c_1(\Gamma(\overline{\pi}))}{2}\right\rceil,$$
and Definition~\ref{def:even-permutation} implies that for any $\pi$ in $A_n$, we have $n\equiv c(\Gamma(\pi))\pmod{2}$. Lower bound~\eqref{eqn:my-ptd-lower-bound-i} becomes
\begin{eqnarray*}
\frac{n+1+c(\Gamma(\overline{\pi}))}{2}-c_1(\Gamma(\overline{\pi}))-1&=&\left\lceil\frac{n+1+c(\Gamma(\overline{\pi}))-2c_1(\Gamma(\overline{\pi}))-2}{2}\right\rceil \\
&\geq&\left\lceil\frac{n-c_1(\Gamma(\overline{\pi}))}{2}\right\rceil.
\end{eqnarray*}
\end{proof}


\subsection{A tighter lower bound on the prefix transposition diameter}

\citet{dias-prefix} observed that the prefix transposition diameter lies between $n/2$ and $n-1$, and conjectured that it is equal to $n-\left\lfloor\frac{n}{4}\right\rfloor$. \citet{chitturi-bounding} then improved those bounds to $2n/3$ and $n-\log_8 n$, respectively. Using our new lower bound, we further improve the lower bound on the prefix transposition diameter. We prove our result in a constructive way, by building families of permutations whose prefix transposition distance is at least $\left\lfloor 3n/4 \right \rfloor$. Figure~\ref{fig:tight-permutations-for-my-new-lb-on-ptdiameter}, which follows our result, shows examples of such permutations. The proof uses permutations from the following class, which has proved useful in the analysis of several other rearrangement problems~\cite{fertin-combinatorics}.

\begin{definition}
  A permutation $\pi$ in $S_n$ is a $2$-permutation if all cycles in $\overline{\pi}$ have length $2$.
\end{definition}

Note that the above definition requires $n\equiv 3\pmod{4}$: indeed, $n+1$ must be even in order to obtain a partition of the elements of $\overline{\pi}$ into pairs, and $(n+1)/2$ is also even by the definition of $\overline{\pi}$.

\begin{theorem}\label{thm:better-lower-bound-on-ptdiameter}
For all $n$, the prefix transposition diameter of $S_n$ is at least $\left\lfloor 3n/4\right\rfloor$.
\end{theorem}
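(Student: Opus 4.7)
The plan is to apply Theorem~\ref{thm:my-ptd-lower-bound-i} to an explicit family of permutations, treating each residue of $n$ modulo~$4$ separately. Since the bound
$$\frac{n+1+c(\Gamma(\overline{\pi}))}{2}-c_1(\Gamma(\overline{\pi}))-\delta$$
grows with $c(\Gamma(\overline{\pi}))$ and shrinks with $c_1(\Gamma(\overline{\pi}))$, I want to pick $\pi$ so that $\overline{\pi}$ has as many short cycles and as few fixed points as possible. The theoretical optimum --- an involution with no fixed points --- is precisely a 2-permutation, which exists only when $n\equiv 3\pmod 4$, and this is where the construction is cleanest.

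In the case $n\equiv 3\pmod 4$ I would take $\pi$ to be a 2-permutation, constructed concretely by declaring $\overline{\pi}$ to be the fixed-point-free involution $x\mapsto x+(n+1)/2\pmod{n+1}$; then $(0,1,\ldots,n)^{-1}\circ\overline{\pi}$ acts as the shift $x\mapsto x+(n-1)/2\pmod{n+1}$, which is a single $(n+1)$-cycle because $\gcd((n-1)/2,\,n+1)=1$ in this residue. This produces a valid $\pi\in S_n$ with $\pi_1\neq 1$, $c(\Gamma(\overline{\pi}))=(n+1)/2$, and $c_1(\Gamma(\overline{\pi}))=0$; Theorem~\ref{thm:my-ptd-lower-bound-i} then gives $ptd(\pi)\geq\frac{3(n+1)}{4}-1=\lfloor 3n/4\rfloor$.

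For the three remaining residues I would perturb the above. For $n\equiv 0\pmod 4$, prepending the entry $1$ to a 2-permutation $\pi^*\in S_{n-1}$ (\ie taking $\pi=\langle 1,\pi^*_1+1,\ldots,\pi^*_{n-1}+1\rangle$) yields $\overline{\pi}$ equal to a shifted copy of $\overline{\pi^*}$ together with a new fixed point at $1$ forced by $\pi_1=1$; since $\delta=0$, the bound still evaluates to $\lfloor 3n/4\rfloor$. For $n\equiv 1\pmod 4$, iterating the construction (prepending $\langle 1,2\rangle$ to a 2-permutation in $S_{n-2}$) introduces exactly two fixed points in $\overline{\pi}$ and the bound again works out. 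For $n\equiv 2\pmod 4$, a naïve prepending of three entries creates one fixed point too many, so I would instead construct $\pi$ with $\pi_1\neq 1$ such that $\overline{\pi}$ has one 3-cycle and $(n-2)/2$ disjoint 2-cycles with no fixed points.

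The main obstacle is that $f$ is not surjective onto $A_{n+1}$: merely specifying a desired cycle type for $\overline{\pi}$ does not ensure the existence of a preimage in $S_n$. For each construction I must therefore verify that $(0,1,\ldots,n)^{-1}\circ\overline{\pi}$ is a single $(n+1)$-cycle, which reduces to a small arithmetic check within each residue class. The case $n\equiv 2\pmod 4$ is the most delicate, since no uniform shift formula is available and the desired pairing must be specified by hand before plugging the resulting values of $c(\Gamma(\overline{\pi}))$, $c_1(\Gamma(\overline{\pi}))$, and $\delta$ into Theorem~\ref{thm:my-ptd-lower-bound-i}.
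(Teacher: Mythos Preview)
Your proposal is correct and follows essentially the same route as the paper: both apply Theorem~\ref{thm:my-ptd-lower-bound-i} to explicit permutations constructed case by case according to $n\bmod 4$, anchoring everything on a $2$-permutation in the residue $n\equiv 3\pmod 4$ and modifying it minimally for the other three residues.

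The differences are cosmetic. You give a specific $2$-permutation (via $\overline{\pi}(x)=x+(n+1)/2$) and verify explicitly that the shift $x\mapsto x+(n-1)/2$ is a single $(n+1)$-cycle, whereas the paper simply says ``any $2$-permutation''. Your worry that $f$ is not surjective onto $A_{n+1}$ is well founded---despite the paper's wording, the image of $f$ is the set of products of the fixed $(n{+}1)$-cycle with an arbitrary $(n{+}1)$-cycle, which is strictly smaller than $A_{n+1}$ for $n\ge 2$---so checking realisability is indeed necessary. The paper sidesteps this by describing each construction at the level of $\pi$ (prepending a $1$, inserting an adjacency, etc.), which automatically guarantees a valid preimage. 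For $n\equiv 2\pmod 4$, where you leave the construction ``by hand'', the paper's recipe is simply to append three elements to a $2$-permutation $\pi^*\in S_{n-3}$ so as to create one $3$-cycle in the cycle graph and no new adjacencies; this gives you the concrete construction you were missing.
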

\begin{proof}
If $n=1$ or $2$, the result is easily verified. For $n\geq 3$, we construct for each value of $n\pmod{4}$ a suitable permutation. Figure~\ref{fig:tight-permutations-for-my-new-lb-on-ptdiameter} shows an example for each case of the proof.
\begin{enumerate}
 \item if $n\equiv 3 \pmod{4}$, then any $2$-permutation $\pi$ in $S_n$ is a valid candidate: indeed, $\overline{\pi}$ contains in this case exactly $(n+1)/2$ cycles of length $2$, and Theorem~\ref{thm:my-ptd-lower-bound-i} yields
$$ptd(\pi)\geq\frac{n+1+\frac{n+1}{2}}{2}-1=\frac{3n+3-4}{4}=\frac{3n-1}{4}.$$
 \item if $n\equiv 0 \pmod{4}$, we build a permutation $\sigma$ in $S_{n}$ by inserting a new first element as a fixed point in $\overline{\pi}$, where $\pi$ is the permutation in $S_{n-1}$ constructed in the previous case. $\overline{\sigma}$ contains $n/2$ cycles of length $2$ and one cycle of length $1$ that corresponds to the fact that $\sigma_1=1$. Theorem~\ref{thm:my-ptd-lower-bound-i} then yields
$$ptd(\sigma)\geq\frac{n+1+\frac{n}{2}+1}{2}-1=\frac{2n+2+n+2-4}{4}=\frac{3n}{4}.$$
 \item if $n\equiv 1 \pmod{4}$, we build a permutation $\xi$ in $S_{n}$ by inserting a fixed point anywhere in $\overline{\sigma}$, where $\sigma$ is the permutation in $S_{n-1}$ built in the previous case. $\overline{\xi}$ contains $(n+1-2)/2$ cycles of length $2$ and two cycles of length $1$, and $\xi_1=1$. Theorem~\ref{thm:my-ptd-lower-bound-i} then yields
$$ptd(\xi)\geq\frac{n+1+\frac{n+1-2}{2}+2}{2}-2=\frac{2n+2+n+1-2+4-8}{4}=\frac{3n-3}{4}.$$
 \item if $n\equiv 2 \pmod{4}$, we build a permutation $\tau$ in $S_{n}$ by appending a $3$-cycle to any permutation $\overline{\pi}$ such that $\pi$ is a $2$-permutation in $S_{n-3}$. $\overline{\tau}$ contains $(n+1-3)/2$ cycles of length $2$ and one cycle of length $3$, and Theorem~\ref{thm:my-ptd-lower-bound-i} yields
$$ptd(\tau)\geq\frac{n+1+\frac{n+1-3}{2}+1}{2}-1=\frac{2n+2+n+1-3+2-4}{4}=\frac{3n-2}{4}.$$
\end{enumerate}
\end{proof}

\begin{figure}[htbp]
\begin{center}
\scalebox{0.75}{
\begin{tabular}{cccc}
 \begin{tikzpicture}[scale=0.75,>=stealth]
  \path (2*45:2.0cm) node [vertex,draw,circle] (v0) {};
  \path (3*45:2.0cm) node [vertex,draw,circle] (v1) {};
  \path (4*45:2.0cm) node [vertex,draw,circle] (v2) {};
  \path (5*45:2.0cm) node [vertex,draw,circle] (v3) {};
  \path (6*45:2.0cm) node [vertex,draw,circle] (v4) {};
  \path (7*45:2.0cm) node [vertex,draw,circle] (v5) {};
  \path (0*45:2.0cm) node [vertex,draw,circle] (v6) {};
  \path (1*45:2.0cm) node [vertex,draw,circle] (v7) {};
  \path (0*45:2.4cm) node {$2$};
  \path (1*45:2.4cm) node {$3$};
  \path (2*45:2.4cm) node {$0$};
  \path (3*45:2.4cm) node {$5$};
  \path (4*45:2.4cm) node {$6$};
  \path (5*45:2.4cm) node {$7$};
  \path (6*45:2.4cm) node {$4$};
  \path (7*45:2.4cm) node {$1$};
  \draw[->] (v1) -- (v2);
  \draw[->] (v2) -- (v3);
  \draw[->] (v3) -- (v4);
  \draw[->] (v4) -- (v5);
  \draw[->] (v5) -- (v6);
  \draw[->] (v6) -- (v7);
  \draw[->] (v7) -- (v0);
  \draw[->] (v0) -- (v1);
  \draw[gray,->] (v0) [out=285,in=135] to (v5);
  \draw[gray,->] (v5) [bend left=40] to (v6);
  \draw[gray,->] (v6) [bend left=40] to (v7);
  \draw[gray,->] (v7) [out=225,in=75] to (v4);
  \draw[gray,->] (v4) [out=105,in=315] to (v1);
  \draw[gray,->] (v1) [bend left=40] to (v2);
  \draw[gray,->] (v2) [bend left=40] to (v3);
  \draw[gray,->] (v3) [out=45,in=255] to (v0);
\end{tikzpicture}
&
\begin{tikzpicture}[scale=0.75,>=stealth]
  \path (2*40:2.0cm) node [vertex,draw,circle] (v0) {};
  \path (3*40:2.0cm) node [vertex,draw,circle] (v1) {};
  \path (4*40:2.0cm) node [vertex,draw,circle] (v2) {};
  \path (5*40:2.0cm) node [vertex,draw,circle] (v3) {};
  \path (6*40:2.0cm) node [vertex,draw,circle] (v4) {};
  \path (7*40:2.0cm) node [vertex,draw,circle] (v5) {};
  \path (8*40:2.0cm) node [vertex,draw,circle] (v6) {};
  \path (0*40:2.0cm) node [vertex,draw,circle] (v7) {};
  \path (1*40:2.0cm) node [vertex,draw,circle] (v8) {};
  \path (0*40:2.4cm) node {$4$};
  \path (1*40:2.4cm) node {$1$};
  \path (2*40:2.4cm) node {$0$};
  \path (3*40:2.4cm) node {$6$};
  \path (4*40:2.4cm) node {$7$};
  \path (5*40:2.4cm) node {$8$};
  \path (6*40:2.4cm) node {$5$};
  \path (7*40:2.4cm) node {$2$};
  \path (8*40:2.4cm) node {$3$};
  \draw[->] (v1) -- (v2);
  \draw[->] (v2) -- (v3);
  \draw[->] (v3) -- (v4);
  \draw[->] (v4) -- (v5);
  \draw[->] (v5) -- (v6);
  \draw[->] (v6) -- (v7);
  \draw[->] (v7) -- (v8);
  \draw[->] (v8) -- (v0);
  \draw[->] (v0) -- (v1);
  \draw[gray,->] (v0) [bend right=40] to (v8);
  \draw[gray,->] (v8) [out=220,in=100] to (v5);
  \draw[gray,->] (v5) [bend left=40] to (v6);
  \draw[gray,->] (v6) [bend left=40] to (v7);
  \draw[gray,->] (v7) [out=180,in=45] to (v4);
  \draw[gray,->] (v4) [out=70,in=290] to (v1);
  \draw[gray,->] (v1) [bend left=40] to (v2);
  \draw[gray,->] (v2) [bend left=40] to (v3);
  \draw[gray,->] (v3) [out=0,in=260] to (v0);
\end{tikzpicture}
&
\begin{tikzpicture}[scale=0.75,>=stealth]
  \path (2*36:2.0cm) node [vertex,draw,circle] (v0) {};
  \path (3*36:2.0cm) node [vertex,draw,circle] (v1) {};
  \path (4*36:2.0cm) node [vertex,draw,circle] (v2) {};
  \path (5*36:2.0cm) node [vertex,draw,circle] (v3) {};
  \path (6*36:2.0cm) node [vertex,draw,circle] (v4) {};
  \path (7*36:2.0cm) node [vertex,draw,circle] (v5) {};
  \path (8*36:2.0cm) node [vertex,draw,circle] (v6) {};
  \path (9*36:2.0cm) node [vertex,draw,circle] (v7) {};
  \path (0*36:2.0cm) node [vertex,draw,circle] (v8) {};
  \path (1*36:2.0cm) node [vertex,draw,circle] (v9) {};
  \path (0*36:2.4cm) node {$4$};
  \path (1*36:2.4cm) node {$1$};
  \path (2*36:2.4cm) node {$0$};
  \path (3*36:2.4cm) node {$9$};
  \path (4*36:2.4cm) node {$6$};
  \path (5*36:2.4cm) node {$7$};
  \path (6*36:2.4cm) node {$8$};
  \path (7*36:2.4cm) node {$5$};
  \path (8*36:2.4cm) node {$2$};
  \path (9*36:2.4cm) node {$3$};
  \draw[->] (v1) -- (v2);
  \draw[->] (v2) -- (v3);
  \draw[->] (v3) -- (v4);
  \draw[->] (v4) -- (v5);
  \draw[->] (v5) -- (v6);
  \draw[->] (v6) -- (v7);
  \draw[->] (v7) -- (v8);
  \draw[->] (v8) -- (v9);
  \draw[->] (v9) -- (v0);
  \draw[->] (v0) -- (v1);
  \draw[gray,->] (v0) [bend right=40] to (v9);
  \draw[gray,->] (v9) [out=220,in=100] to (v6);
  \draw[gray,->] (v6) [bend left=40] to (v7);
  \draw[gray,->] (v7) [bend left=40] to (v8);
  \draw[gray,->] (v8) [out=180,in=55] to (v5);
  \draw[gray,->] (v5) [out=80,in=330] to (v2);
  \draw[gray,->] (v2) [bend left=40] to (v3);
  \draw[gray,->] (v3) [bend left=40] to (v4);
  \draw[gray,->] (v4) [out=30,in=280] to (v1);
  \draw[gray,->] (v1) [bend right=40] to (v0);
\end{tikzpicture}
&
\begin{tikzpicture}[scale=0.75,>=stealth]
  \path ( 2*32.72:2.0cm) node [vertex,draw,circle] (v0) {};
  \path ( 3*32.72:2.0cm) node [vertex,draw,circle] (v1) {};
  \path ( 4*32.72:2.0cm) node [vertex,draw,circle] (v2) {};
  \path ( 5*32.72:2.0cm) node [vertex,draw,circle] (v3) {};
  \path ( 6*32.72:2.0cm) node [vertex,draw,circle] (v4) {};
  \path ( 7*32.72:2.0cm) node [vertex,draw,circle] (v5) {};
  \path ( 8*32.72:2.0cm) node [vertex,draw,circle] (v6) {};
  \path ( 9*32.72:2.0cm) node [vertex,draw,circle] (v7) {};
  \path (10*32.72:2.0cm) node [vertex,draw,circle] (v8) {};
  \path ( 0*32.72:2.0cm) node [vertex,draw,circle] (v9) {};
  \path ( 1*32.72:2.0cm) node [vertex,draw,circle] (v10) {};
  \path ( 0*32.72:2.4cm) node {$2$};
  \path ( 1*32.72:2.4cm) node {$3$};
  \path ( 2*32.72:2.4cm) node {$0$};
  \path ( 3*32.72:2.4cm) node {$9$};
  \path ( 4*32.72:2.5cm) node {$10$};
  \path ( 5*32.72:2.4cm) node {$8$};
  \path ( 6*32.72:2.4cm) node {$5$};
  \path ( 7*32.72:2.4cm) node {$6$};
  \path ( 8*32.72:2.4cm) node {$7$};
  \path ( 9*32.72:2.4cm) node {$4$};
  \path (10*32.72:2.4cm) node {$1$};
  \draw[->] (v1) -- (v2);
  \draw[->] (v2) -- (v3);
  \draw[->] (v3) -- (v4);
  \draw[->] (v4) -- (v5);
  \draw[->] (v5) -- (v6);
  \draw[->] (v6) -- (v7);
  \draw[->] (v7) -- (v8);
  \draw[->] (v8) -- (v9);
  \draw[->] (v9) -- (v10);
  \draw[->] (v10) -- (v0);
  \draw[->] (v0) -- (v1);
  \draw[gray,->] (v0) [out=270,in=135] to (v8);
  \draw[gray,->] (v8) [bend left=40] to (v9);
  \draw[gray,->] (v9) [bend left=40] to (v10);
  \draw[gray,->] (v10) [out=220,in=100] to (v7);
  \draw[gray,->] (v7) [out=125,in=10] to (v4);
  \draw[gray,->] (v4) [bend left=40] to (v5);
  \draw[gray,->] (v5) [bend left=40] to (v6);
  \draw[gray,->] (v6) [out=70,in=330] to (v3);
  \draw[gray,->] (v3) [out=0,in=280] to (v1);
  \draw[gray,->] (v1) [bend left=40] to (v2);
  \draw[gray,->] (v2) [out=310,in=230] to (v0);
\end{tikzpicture}\\
$n\equiv 3\pmod{4}$ & $n\equiv 0\pmod{4}$ & $n\equiv 1\pmod{4}$ & $n\equiv 2\pmod{4}$ \\
\end{tabular}
}
\end{center}
\caption{Cycle graphs of permutations $\pi$ in $S_n$ with $ptd(\pi)\geq\left\lfloor 3n/4\right \rfloor$, for all values of $n\bmod{4}$.}
\label{fig:tight-permutations-for-my-new-lb-on-ptdiameter}
\end{figure}
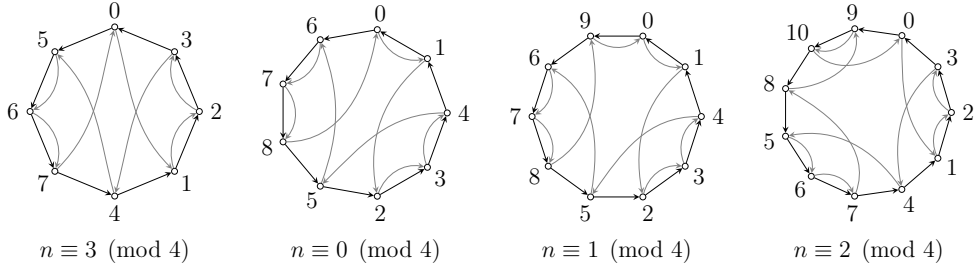

We can actually show that the lower bound on the prefix transposition distance of $2$-permutations is tight.
In order to do that, we will need the following result. We use the following relation to order black arcs:
$(\pi_i,\pi_{i-1})\prec(\pi_j,\pi_{j-1})\mbox{ if }j\geq i.$

\begin{lemma}\label{lemma:nonoriented-cycle-always-crosses-another-one}
\cite{bafna-transpositions} For any $\pi$ in $S_n$, let $C_1$ be a cycle of length $2$ in $G(\pi)$ with black arcs $a$, $b$; then there exists another cycle $C_2$ in $G(\pi)$ containing two black arcs $c$ and $d$ such that $a\prec c\prec b\prec d$ or $c\prec a\prec d\prec b$.
\end{lemma}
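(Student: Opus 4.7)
My plan is to exploit the rigid structure of a length-$2$ cycle. Denoting the two black arcs of $C_1$ by $a = (\pi_i, \pi_{i-1})$ and $b = (\pi_j, \pi_{j-1})$ with $i < j$, the alternating structure of $C_1$ forces its grey arcs to be $(\pi_{i-1}, \pi_j)$ and $(\pi_{j-1}, \pi_i)$, so $\pi_j = \pi_{i-1} + 1$ and $\pi_i = \pi_{j-1} + 1$. Call the positions in $I = \{i, i+1, \ldots, j-1\}$ \emph{inner} and the remaining positions \emph{outer}; I would then hunt for a grey arc that sits in some cycle $C_2 \neq C_1$ and whose endpoints straddle this inner/outer partition, so that the two black arcs of $C_2$ flanking that grey arc furnish the desired $c$ and $d$.

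First I would set $V = \{\pi_k : k \in I\}$. Since $|V| = j - i \leq n$, the set $V$ is a proper subset of $\{0, 1, \ldots, n\}$, so in the cyclic ordering of $\{0, 1, \ldots, n\}$ there must exist some $v \in V$ whose successor $v + 1 \pmod{n+1}$ does \emph{not} lie in $V$. A short verification then shows that $(v, v+1)$ is not a grey arc of $C_1$: of $C_1$'s two grey arcs, $(\pi_{i-1}, \pi_j)$ starts at $\pi_{i-1} \notin V$, while $(\pi_{j-1}, \pi_i)$ has successor $\pi_i \in V$, which contradicts $v+1 \notin V$. Hence $(v, v+1)$ lies in some other cycle $C_2$.

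Next I would extract the two flanking black arcs of $(v, v+1)$ inside $C_2$. The preceding arc ends at $v$, so it sits at position $\pi^{-1}(v) + 1 \in \{i+1, \ldots, j\}$, and the following arc starts at $v+1$, so it sits at position $\pi^{-1}(v+1) \notin I$. The main boundary check is that neither can be $b$: position $j$ for the preceding arc would force $v = \pi_{j-1}$ and hence $v + 1 = \pi_i \in V$, contradicting the choice of $v$; position $j$ for the following arc would force $v + 1 = \pi_j$ and hence $v = \pi_{i-1} \notin V$, another contradiction. So the preceding arc lies at an inner position strictly between $i$ and $j$, and the following arc lies at an outer position either less than $i$ or greater than $j$. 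Relabelling these two arcs as $c$ and $d$ (swapping them if the outer one sits before $a$ rather than after $b$) produces one of the two required patterns $a \prec c \prec b \prec d$ or $c \prec a \prec d \prec b$.

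The one genuinely delicate step will be the boundary bookkeeping just described, confirming that the black arcs flanking $(v, v+1)$ in $C_2$ are not accidentally $b$. Both exclusions reduce to the same mechanism: the identities $\pi_j = \pi_{i-1} + 1$ and $\pi_i = \pi_{j-1} + 1$ built into a length-$2$ cycle force the "suspicious" grey arc of $C_1$ to connect precisely the consecutive members $\pi_{j-1}$ and $\pi_i$ of $V$ that the choice of $v$ was set up to avoid.
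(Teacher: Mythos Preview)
The paper does not supply a proof of this lemma: it is quoted from \citeauthor{bafna-transpositions}~\cite{bafna-transpositions} and used as a black box in the proof of Proposition~\ref{prop:ptd-of-two-permutations}. So there is nothing in the paper to compare your argument against.

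That said, your argument is sound and is essentially the standard one. The key idea---that the set $V$ of values sitting at inner positions is a \emph{proper} subset of $\{0,1,\ldots,n\}$ and therefore has a ``boundary'' value $v\in V$ with $v+1\notin V$, whose grey arc $(v,v+1)$ must jump from an inner position to an outer one---is exactly right, and your boundary checks ruling out $m_1=j$ and $m_2=j$ via the identities $\pi_j=\pi_{i-1}+1$ and $\pi_i=\pi_{j-1}+1$ are correct. Two small points worth making explicit in a final write-up: first, the fact that the two flanking black arcs are genuinely distinct (so $C_2$ is not a $1$-cycle) follows immediately from your computation, since one lies at an inner position in $\{i+1,\ldots,j-1\}$ and the other at an outer position, but you may want to say so; second, the wraparound black arc $(\pi_0,\pi_n)$ and the grey arc $(n,0)$ require the ``$+1$'' to be read modulo $n+1$, which your parenthetical already indicates but which deserves a sentence when one of $a,b$ is the wraparound arc.
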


This result can be interpreted in a more visual way in the case of a $2$-permutation $\pi$ by saying that in $G(\pi)$, every $2$-cycle intersects with another $2$-cycle. We are now ready to prove the following result.

\begin{proposition}\label{prop:ptd-of-two-permutations}
For any $2$-permutation $\pi$ in $S_n$, we have $ptd(\pi)=(3n-1)/4$.
\end{proposition}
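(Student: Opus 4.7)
The lower bound $ptd(\pi)\geq (3n-1)/4$ is immediate from Theorem~\ref{thm:my-ptd-lower-bound-i} once we observe that, for a $2$-permutation $\pi$ in $S_n$ (which forces $n\equiv 3\pmod{4}$), the graph $\Gamma(\overline{\pi})$ consists of exactly $(n+1)/2$ cycles of length $2$, so $c(\Gamma(\overline{\pi}))=(n+1)/2$, $c_1(\Gamma(\overline{\pi}))=0$, and $\pi_1\neq 1$ (otherwise $(0,1)$ would be a fixed point of $\overline{\pi}$, contradicting the $2$-permutation property). A direct calculation then yields $(3n-1)/4$. The remaining work, and the actual content of the proposition, is to exhibit a sorting sequence of that length.

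\textbf{Upper bound by induction on $n$.} For the base case $n=3$, the only $2$-permutation in $S_3$ is $\pi=\langle 3\ 2\ 1\rangle$ (easily verified by computing $\overline{\pi}=(0,2)(1,3)$), and the sequence $\tau(1,2,4)$ followed by $\tau(1,2,3)$ sorts it in $2=(3\cdot 3-1)/4$ prefix transpositions. For the inductive step with $n\geq 7$, I would invoke Lemma~\ref{lemma:nonoriented-cycle-always-crosses-another-one} to locate two $2$-cycles $C_1,C_2$ of $G(\pi)$ whose four black arcs interleave in positions $p_1<p_2<p_3<p_4$. I would then construct an explicit sequence of three prefix transpositions that transforms $\pi$ into a permutation $\pi'$ in whose cycle graph the four black arcs of $C_1\cup C_2$ have become adjacencies; equivalently, $\overline{\pi'}$ equals $\overline{\pi}$ multiplied by three conjugates of $3$-cycles through $1$, in a way that replaces $(a_1,a_2)(a_3,a_4)$ by four fixed points and leaves the remaining $(n-3)/2$ cycles untouched. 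Contracting these adjacencies gives a $2$-permutation in $S_{n-4}$, to which the inductive hypothesis applies and yields at most $(3(n-4)-1)/4=(3n-13)/4$ additional prefix transpositions, for a total of $3+(3n-13)/4=(3n-1)/4$.

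\textbf{Main obstacle.} The delicate point is producing the three prefix transpositions from any configuration of intersecting $2$-cycles, because each operation is forced to act on the initial segment of the current permutation. My plan is to use the first prefix transposition to rotate an arc of the chosen pair into position $1$, the second to break the interleaving, and the third to close off the resolution so that four adjacencies materialise and the rest of the cycle structure is preserved. A case analysis on whether the black arc incident to $0$ belongs to $C_1$, to $C_2$, or to neither will be needed, as will a verification (at the level of $\overline{\pi}$) that the three conjugated $3$-cycles together act exactly as the product $(a_1,a_2)(a_3,a_4)$ on the affected elements while fixing everything else. Once this local surgery is in place, the induction runs unchanged and the proposition follows.
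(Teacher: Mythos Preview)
Your lower bound is fine and your overall strategy for the upper bound --- repeatedly turning a crossing pair of $2$-cycles into four adjacencies --- is exactly what the paper does. The genuine gap is in the contraction step. For \emph{prefix} transpositions, reducing a permutation $\sigma$ with adjacencies to a shorter permutation $\sigma'$ and invoking the inductive hypothesis is not innocuous: a prefix transposition on $\sigma'$ lifts to a transposition on $\sigma$ that starts at the first breakpoint of $\widetilde{\sigma}$, which is position $1$ only when $\sigma_1\neq 1$. Your three-move plan (``rotate an arc of the chosen pair into position $1$, then resolve'') ends precisely with an adjacency at position $1$, so the sorting sequence you pull back from $\sigma'\in S_{n-4}$ consists of \emph{non-prefix} transpositions on $\sigma$. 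A concrete illustration of the phenomenon: $\sigma=\langle 1\ 2\ 5\ 3\ 4\rangle$ reduces to $\sigma'=\langle 2\ 1\rangle$ with $ptd(\sigma')=1$, yet $ptd(\sigma)=2$. Thus the implication ``$ptd(\sigma')\leq L$ gives $L$ additional prefix transpositions for $\sigma$'' fails, and so does the induction as written.

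The paper sidesteps this by not contracting at all: it works directly on the evolving permutation. The first pair (the leftmost $2$-cycle, whose arc already sits at position~$1$, together with a cycle it crosses by Lemma~\ref{lemma:nonoriented-cycle-always-crosses-another-one}) is dispatched in two prefix transpositions. Every later pair costs three: one move to slide the accumulated prefix of fixed points past the next leftmost $2$-cycle so that its arc lands at position~$1$, then the same two resolving moves. The count $2+\frac{3}{2}\bigl(\frac{n+1}{2}-2\bigr)=\frac{3n-1}{4}$ drops out. If you want to keep your inductive framing, the fix is to reorder your three moves as ``two resolving moves, then one shift'' so that the permutation you hand to the inductive call has $\sigma_1\neq 1$; only then does contraction to a $2$-permutation in $S_{n-4}$ preserve the prefix constraint. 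Your case analysis on whether the arc at~$0$ belongs to $C_1$, $C_2$, or neither is unnecessary once you commit to always attacking the leftmost $2$-cycle.
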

\begin{proof}
The lower bound has already been observed in Theorem~\ref{thm:better-lower-bound-on-ptdiameter}. To show that it is also an upper bound, we give an algorithm that sorts $\pi$ in exactly that number of steps. By Lemma~\ref{lemma:nonoriented-cycle-always-crosses-another-one}, every $2$-cycle intersects with another $2$-cycle, and as observed by \citet{bafna-transpositions}, a sequence of two transpositions on any two crossing $2$-cycles will transform them into four adjacencies:

\begin{center}
 \begin{tikzpicture}[>=stealth,shorten <=1pt,shorten >=2pt]
    \foreach \p in {1,4,7,10}
        \draw[->] (\p,0) -- (\p-1,0);
    \draw[gray,->] (0,0) to  [out=90,in=90]  (7,0); 
    \draw[gray,->] (6,0) to  [out=90,in=90]  (1,0); 
    \draw[gray,->] (3,0) to  [out=90,in=90]  (10,0); 
    \draw[gray,->] (9,0) to  [out=90,in=90]  (4,0); 
    \node[vertex] at (0,0) [draw,circle] [label=below:\small $0$] {};
\begin{scope}[label distance=2pt]
    \node[vertex] at (1,0) [draw,circle] [label=below:\small $\pi_1$] {};
    \node[vertex] at (2,0) {$\cdots$};
    \node[vertex] at (3,0) [draw,circle] [label=below:\small $\pi_{i-1}$] {};
    \node[vertex] at (4,0) [draw,circle] [label=below:\small\ $\pi_{i}$] {};
    \node[vertex] at (5,0) {$\cdots$};
    \node[vertex] at (6,0) [draw,circle] [label=below:\small $\pi_{j-1}$] {};
    \node[vertex] at (7,0) [draw,circle] [label=below:\small $\pi_j$] {};
    \node[vertex] at (8,0) {$\cdots$};
    \node[vertex] at (9,0) [draw,circle] [label=below:\small $\pi_{k-1}$] {};
    \node[vertex] at (10,0) [draw,circle] [label=below:\small\ $\pi_k$] {};
    \node[vertex] at (11,0) {$\cdots$};
\end{scope}

    \draw (0.6,-0.15) rectangle (3.6,-0.6);
    \draw (3.7,-0.15) rectangle (9.7,-0.6);
\end{tikzpicture}

becomes

\begin{tikzpicture}[>=stealth,shorten <=1pt,shorten >=2pt]
    \foreach \p in {1,4,7,10}
        \draw[->] (\p,0) -- (\p-1,0);
    \draw[gray,->] (0,0) to  [out=90,in=90]  (4,0); 
    \draw[gray,->] (3,0) to  [out=90,in=90]  (7,0); 
    \draw[gray,->] (6,0) to  [out=90,in=90]  (1,0); 
    \draw[gray,->] (9,0) to  [controls=+(80:1) and +(100:1)]  (10,0); 
    \node[vertex] at (0,0) [draw,circle] [label=below:\small $0$] {};
\begin{scope}[label distance=2pt]
    \node[vertex] at (1,0) [draw,circle] [label=below:\small $\pi_i$] {};
    \node[vertex] at (2,0) {$\cdots$};
    \node[vertex] at (3,0) [draw,circle] [label=below:\small $\pi_{j-1}$] {};
    \node[vertex] at (4,0) [draw,circle] [label=below:\small\ \ $\pi_{j}$] {};
    \node[vertex] at (5,0) {$\cdots$};
    \node[vertex] at (6,0) [draw,circle] [label=below:\small $\pi_{k-1}$] {};
    \node[vertex] at (7,0) [draw,circle] [label=below:\small\ $\pi_1$] {};
    \node[vertex] at (8,0) {$\cdots$};
    \node[vertex] at (9,0) [draw,circle] [label=below:\small $\pi_{i-1}$] {};
    \node[vertex] at (10,0) [draw,circle] [label=below:\small $\pi_k$] {};
    \node[vertex] at (11,0) {$\cdots$};
\end{scope}
    \draw (0.6,-0.15) rectangle (3.7,-0.6);
    \draw (3.8,-0.15) rectangle (6.7,-0.6);
\end{tikzpicture}

which becomes

\vspace*{5mm}
\begin{tikzpicture}[>=stealth,shorten <=1pt,shorten >=2pt]
    \foreach \p in {1,4,7,10}
        \draw[->] (\p,0) -- (\p-1,0);
    \draw[gray,->] (0,0) to  [controls=+(80:1) and +(100:1)]  (1,0); 
    \draw[gray,->] (3,0) to  [controls=+(80:1) and +(100:1)]  (4,0); 
    \draw[gray,->] (6,0) to  [controls=+(80:1) and +(100:1)]  (7,0); 
    \draw[gray,->] (9,0) to  [controls=+(80:1) and +(100:1)]  (10,0); 
    \node[vertex] at (0,0) [draw,circle] [label=below:\small $0$] {};
\begin{scope}[label distance=2pt]
    \node[vertex] at (1,0) [draw,circle] [label=below:\small $\pi_j$] {};
    \node[vertex] at (2,0) {$\cdots$};
    \node[vertex] at (3,0) [draw,circle] [label=below:\small $\pi_{k-1}$] {};
    \node[vertex] at (4,0) [draw,circle] [label=below:\small\ $\pi_{i}$] {};
    \node[vertex] at (5,0) {$\cdots$};
    \node[vertex] at (6,0) [draw,circle] [label=below:\small $\pi_{j-1}$] {};
    \node[vertex] at (7,0) [draw,circle] [label=below:\small\ $\pi_1$] {};
    \node[vertex] at (8,0) {$\cdots$};
    \node[vertex] at (9,0) [draw,circle] [label=below:\small $\pi_{i-1}$] {};
    \node[vertex] at (10,0) [draw,circle] [label=below:\small $\pi_k$] {};
    \node[vertex] at (11,0) {$\cdots$};
\end{scope}
\end{tikzpicture}
\end{center} 

We transform the leftmost $2$-cycle and any $2$-cycle it crosses into four adjacencies using two prefix transpositions, which transforms $\pi$ into a permutation $\sigma$ that contains $\frac{n+1}{2}-2$ cycles of length $2$ and fixes the first element. Then, we carry out again this process until $\sigma$ is sorted, but we need three prefix transpositions at each step, since one move must be wasted to move the fixed points in $\sigma$'s prefix out of the way, for instance as follows:
\begin{center}
 \begin{tikzpicture}[>=stealth,shorten <=1pt,shorten >=2pt]
    \foreach \p in {1,2,5,6,9}
        \draw[->] (\p,0) -- (\p-1,0);
    \draw[gray,->] (0,0) to  [controls=+(75:1) and +(105:1)]  (1,0); 
    \draw[gray,->] (1,0) to  [controls=+(75:1) and +(105:1)]  (2,0); 
    \draw[gray,->] (4,0) to  [controls=+(75:1) and +(105:1)]  (5,0); 
    \draw[gray,->] (5,0) to  [out=80,in=100]  (9,0); 
    \draw[gray,->] (8,0) to  [out=100,in=80]  (6,0); 
    \node[vertex] at (0,0) [draw,circle] [label=below:\small $0$] {};
    \node[vertex] at (1,0) [draw,circle] [label=below:\small $1$] {};
    \node[vertex] at (2,0) [draw,circle] [label=below:\small $2$] {};
    \node[vertex] at (3,0) {$\cdots$};
\begin{scope}[label distance=2pt]
    \node[vertex] at (4,0) [draw,circle] [label=below:\small$\pi_{j-2}$\ \ ] {};
    \node[vertex] at (5,0) [draw,circle] [label=below:\small\ \ $\pi_{j-1}$] {};
    \node[vertex] at (6,0) [draw,circle] [label=below:\small\ \ \ $\pi_{j}$] {};
    \node[vertex] at (7,0) {$\cdots$};
    \node[vertex] at (8,0) [draw,circle] [label=below:\small $\pi_{k-1}$] {};
    \node[vertex] at (9,0) [draw,circle] [label=below:\small $\pi_{k}$] {};
    \node[vertex] at (10,0) {$\cdots$};
\end{scope}
    \draw (0.7,-0.15) rectangle (5.6,-0.6);
    \draw (5.7,-0.15) rectangle (8.5,-0.6);
\end{tikzpicture}

becomes

\begin{tikzpicture}[>=stealth,shorten <=1pt,shorten >=2pt]
    \foreach \p in {1,4,5,8,9}
        \draw[->] (\p,0) -- (\p-1,0);
    \draw[gray,->] (0,0) to  [out=80,in=100]  (4,0); 
    \draw[gray,->] (3,0) to  [out=100,in=80]  (1,0); 
    \draw[gray,->] (4,0) to  [controls=+(75:1) and +(105:1)]  (5,0); 
    \draw[gray,->] (7,0) to  [controls=+(75:1) and +(105:1)]  (8,0); 
    \draw[gray,->] (8,0) to  [controls=+(75:1) and +(105:1)]  (9,0); 
    \node[vertex] at (0,0) [draw,circle] [label=below:\small $0$] {};
\begin{scope}[label distance=2pt]
    \node[vertex] at (1,0) [draw,circle] [label=below:\small $\pi_j$] {};
    \node[vertex] at (2,0) {$\cdots$};
    \node[vertex] at (3,0) [draw,circle] [label=below:\small $\pi_{k-1}$] {};
\end{scope}
    \node[vertex] at (4,0) [draw,circle] [label=below:\small $1$] {};
    \node[vertex] at (5,0) [draw,circle] [label=below:\small $2$] {};
    \node[vertex] at (6,0) {$\cdots$};
\begin{scope}[label distance=2pt]
    \node[vertex] at (7,0) [draw,circle] [label=below:\small $\pi_{j-2}$] {};
    \node[vertex] at (8,0) [draw,circle] [label=below:\small $\pi_{j-1}$] {};
    \node[vertex] at (9,0) [draw,circle] [label=below:\small\ $\pi_k$] {};
    \node[vertex] at (10,0) {$\cdots$};
\end{scope}
\end{tikzpicture}
\end{center} 
The algorithm is guaranteed to terminate, since after applying each sequence of three transpositions of the form described above, we obtain either $\iota$, or a permutation on which we can repeat the same process by Lemma~\ref{lemma:nonoriented-cycle-always-crosses-another-one}. 
The proof follows from the fact that the number of prefix transpositions used by this algorithm is $$2+\frac{3}{2}\left(\frac{n+1}{2}-2\right)=\frac{8+3n-9}{4}=\frac{3n-1}{4}.$$
\end{proof}


\section{Experimental results}\label{sec:ptd-experimental}

We generated all permutations in $S_n$, for $1\leq n\leq 12$, along with their prefix transposition distance, and compared lower bounds~\eqref{eqn:dias-meidanis-ptd-lb}, \eqref{eqn:chitturi-sudborough-ptd-lb} and~\eqref{eqn:my-ptd-lower-bound-i} to the actual distance. Table~\ref{tab:comparing-lower-bounds-on-ptd} shows the results. It can be observed that many more permutations are tight with respect to our lower bound (column~5) than with respect to \citeauthor{dias-prefix}' (column~3) or \citeauthor{chitturi-bounding}'s (column~4).

\begin{table}[htbp]
\centering
{
\begin{tabular}{r|r|r|r|r}
$n$ &\        $n!$ &\ tight w.r.t.~\eqref{eqn:dias-meidanis-ptd-lb} &\ tight w.r.t.~\eqref{eqn:chitturi-sudborough-ptd-lb} &\ tight w.r.t.~\eqref{eqn:my-ptd-lower-bound-i} \\
\hline                                               
  1 &             1 &            1 &            1 &             1  \\
  2 &             2 &            2 &            2 &             2  \\
  3 &             6 &            4 &            4 &             6  \\
  4 &            24 &           13 &           15 &            22  \\
  5 &           120 &           41 &           48 &           106  \\
  6 &           720 &          196 &          255 &           574  \\
  7 &        5\,040 &          862 &       1\,144 &        3\,782  \\
  8 &       40\,320 &       5\,489 &       7\,737 &       27\,471  \\
  9 &      362\,880 &      31\,033 &      44\,187 &      229\,167  \\
 10 &   3\,628\,800 &     247\,006 &     369\,979 &   2\,103\,510  \\
 11 &  39\,916\,800 &  1\,706\,816 &  2\,575\,693 &  21\,280\,564  \\
 12 & 479\,001\,600 & 16\,302\,397 & 25\,791\,862 & 236\,651\,919  \\
\end{tabular}
}
\caption{Comparison of all known lower bounds on the prefix transposition distance. Column $3$ lists the number of cases where \citeauthor{dias-prefix}'s lower bound is tight~\cite[page 48]{fortuna-distancias}, column $4$ lists the number of cases where \citeauthor{chitturi-bounding}'s lower bound is tight, and column~$5$ lists the number of cases where our lower bound is tight.}
\label{tab:comparing-lower-bounds-on-ptd}
\end{table}

We also examined how large the gap between our lower bound and the actual prefix transposition distance can get. Table~\ref{tab:gap-between-my-ptd-lower-bound-and-ptd} counts permutations whose prefix transposition distance equals our lower bound plus $\Delta$. We note that, for $n\leq 9$, all permutations have a prefix transposition distance that is at most our lower bound plus $2$ (plus $3$ for $n\leq 12$).

\begin{table}[htbp]
\centering
{
\begin{tabular}{r|r|r|r|r|r}
$n$ &\        $n!$ & \ $\Delta=0$ &\  $\Delta=1$ &\ $\Delta=2$ &\ $\Delta=3$ \\
\hline
  1 &             1 &             1 &             0 &            0 &      0 \\
  2 &             2 &             2 &             0 &            0 &      0 \\
  3 &             6 &             6 &             0 &            0 &      0 \\
  4 &            24 &            22 &             2 &            0 &      0 \\
  5 &           120 &           106 &            14 &            0 &      0 \\
  6 &           720 &           574 &           143 &            3 &      0 \\
  7 &        5\,040 &        3\,782 &        1\,234 &           24 &      0 \\
  8 &       40\,320 &       27\,471 &       12\,310 &          539 &      0 \\
  9 &      362\,880 &      229\,167 &      128\,576 &       5\,137 &      0 \\
 10 &   3\,628\,800 &   2\,103\,510 &   1\,427\,966 &      97\,321 &      3 \\
 11 &  39\,916\,800 &  21\,280\,564 &  17\,532\,948 &  1\,103\,254 &     34 \\
 12 & 479\,001\,600 & 236\,651\,919 & 221\,680\,237 & 20\,667\,140 & 2\,304 \\
\end{tabular}
}
\caption{Number of cases where our lower bound underestimates $ptd(\pi)$ by $\Delta$.}
\label{tab:gap-between-my-ptd-lower-bound-and-ptd}
\end{table}


\section{Further observations on $\overline{\pi}$}

Now that we have an alternate representation of the cycle graph of a permutation as another permutation, we would like to examine whether or not other results can be obtained that could be helpful in getting insight on problems related to length-constrained factorisations of permutations. We investigate in this section a few relations between $\pi$ and $\overline{\pi}$, starting with relations between the cycle structures of both permutations when subjected to particular operations. 

\subsection{Cycle structures}\label{sec:cycle-structures}

A natural question is whether conjugacy classes are preserved by $f(\cdot)$, \ie whether $\overline{\pi}$ and $\overline{\pi^\sigma}$ are in the same conjugacy class for any choice of $\pi$ and $\sigma$ in $S_n$. The answer is negative in general, as the following counter-example shows: $\pi=(1,3)(2)$ and $\tau=(1,2)(3)$ are conjugate, but $\overline{\pi}=(0,1,2,3)\circ(0,1,2,3)=(0,2)(1,3)$ and $\overline{\tau}=(0,1,2,3)\circ(0,3,1,2)=(0)(3,2,1)$ are not. 
However, the relation we are interested in holds for two particular cases, whose significance we explain below.

The following result is similar in spirit to \citeauthor{tannier-subquadratic}'s characterisation of ``inverse breakpoint graphs'' of signed permutations~\cite{tannier-subquadratic}, and shows that the cycle graphs of a permutation and of its inverse have exactly the same cycle structure.

\begin{lemma}\label{lemma:overline-pi-and-overline-inverse-pi-same-class}
 For any $\pi$ in $S_n$, we have $\overline{\pi^{-1}}=(\overline{\pi}^{-1})^{(\pi^{-1})}$.
\end{lemma}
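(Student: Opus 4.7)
The plan is to derive this as a short corollary of Lemma~\ref{lemma:value-of-overline-brack-pi-circ-sigma-brack}. The key observation I would make first is that $\overline{\iota}$ is the identity of $S_{n+1}$. This is immediate from Equation~\eqref{eq-alpha}: $\overline{\iota} = (0,1,2,\ldots,n) \circ (0,n,n-1,\ldots,1)$, and the second factor is manifestly the inverse of the first. Having that in hand unlocks everything, because it gives a handle on $\overline{\pi\circ\pi^{-1}}$.

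Next, I would apply Lemma~\ref{lemma:value-of-overline-brack-pi-circ-sigma-brack} with $\sigma = \pi^{-1}$, yielding
\[
\overline{\iota} \;=\; \overline{\pi \circ \pi^{-1}} \;=\; \overline{\pi} \circ \overline{\pi^{-1}}^{\pi}.
\]
Since the left-hand side is the identity in $S_{n+1}$, this forces $\overline{\pi^{-1}}^{\pi} = \overline{\pi}^{-1}$, \ie $\pi \circ \overline{\pi^{-1}} \circ \pi^{-1} = \overline{\pi}^{-1}$. Solving for $\overline{\pi^{-1}}$ by conjugating both sides by $\pi^{-1}$ then gives exactly
\[
\overline{\pi^{-1}} \;=\; \pi^{-1} \circ \overline{\pi}^{-1} \circ \pi \;=\; (\overline{\pi}^{-1})^{(\pi^{-1})},
\]
which is the claimed identity.

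There is no real obstacle here; the only mildly delicate point is making sure one has the conjugation conventions right (conjugation by $\pi$ versus $\pi^{-1}$), which is why I would write out $\overline{\pi^{-1}}^{\pi} = \pi \circ \overline{\pi^{-1}} \circ \pi^{-1}$ explicitly before isolating $\overline{\pi^{-1}}$. As a sanity check, one also concludes that the cycle types of $\overline{\pi}$ and $\overline{\pi^{-1}}$ coincide (since conjugation preserves cycle structure and inversion preserves it as well), which matches the intended interpretation that $G(\pi)$ and $G(\pi^{-1})$ have the same cycle structure.
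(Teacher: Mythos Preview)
Your proof is correct and is a genuinely different (and rather slicker) route than the paper's. The paper proves the identity by a direct computation from the definition of $\overline{\ \cdot\ }$: it expands $\overline{\pi^{-1}}$ via Equation~\eqref{eq-alpha}, inserts $\pi^{-1}\circ\pi$ and rewrites the cycle $(0,\pi^{-1}_n,\ldots,\pi^{-1}_1)$ as $\pi^{-1}\circ(0,n,\ldots,1)\circ\pi$, then recognises the result as $\pi^{-1}\circ\overline{\pi}^{-1}\circ\pi$. You instead apply Lemma~\ref{lemma:value-of-overline-brack-pi-circ-sigma-brack} with $\sigma=\pi^{-1}$ together with the observation that $\overline{\iota}$ is the identity of $S_{n+1}$, and read off the conclusion in two lines. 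The advantage of your approach is that it exposes the statement as a formal consequence of the ``twisted multiplicativity'' $\overline{\pi\circ\sigma}=\overline{\pi}\circ\overline{\sigma}^{\pi}$, with no need to unwind the cycle notation again; the paper's computation, on the other hand, is independent of Lemma~\ref{lemma:value-of-overline-brack-pi-circ-sigma-brack} and re-derives everything from first principles.
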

\begin{proof}
Straightforward:
\begin{eqnarray*}
\overline{\pi^{-1}}&=&(0,1,2, \ldots, n)\circ(0, \pi^{-1}_n, \pi^{-1}_{n-1}, \ldots, \pi^{-1}_1) \\
&=&\pi^{-1}\circ\pi\circ(0,1,2, \ldots, n)\circ\pi^{-1}\circ(0, n, {n-1}, \ldots, 1)\circ\pi \\
&=&\pi^{-1}\circ(0,\pi_1,\pi_2, \ldots, \pi_n)\circ(0, n, {n-1}, \ldots, 1)\circ\pi \\
&=&\pi^{-1}\circ\overline{\pi}^{-1}\circ\pi\\
&=&(\overline{\pi}^{-1})^{(\pi^{-1})}.
\end{eqnarray*}
\end{proof}

\citeauthor{tannier-subquadratic}'s idea of examining how the cycle graph of $\pi^{-1}$ evolves when applying a signed reversal to a permutation $\pi$ -- which reverses and flips the signs of the elements of an interval of $\pi$ -- was a key point in their successful attempt at designing an algorithm with an improved running time for sorting permutations by signed reversals. The above relation allows us to derive a simple description of the more general situation (albeit restricted to ``traditional'', unsigned permutations), \ie how $\overline{\pi^{-1}}$ changes when an arbitrary rearrangement $\sigma$ is applied to $\pi$.

\begin{corollary}
For all $\pi$, $\sigma$ in $S_n$, we have $\overline{(\pi\circ\sigma)^{-1}}=(\overline{\sigma}^{-1}\circ\overline{\pi^{-1}})^{\sigma^{-1}}.$
\end{corollary}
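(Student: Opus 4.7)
The plan is to combine the two lemmas that precede the corollary: the formula $\overline{\pi\circ\sigma}=\overline{\pi}\circ\overline{\sigma}^{\pi}$ (Lemma on the value of $\overline{\pi\circ\sigma}$) and the identity $\overline{\pi^{-1}}=(\overline{\pi}^{-1})^{(\pi^{-1})}$ (Lemma~\ref{lemma:overline-pi-and-overline-inverse-pi-same-class}). First I would apply Lemma~\ref{lemma:overline-pi-and-overline-inverse-pi-same-class} with $\pi\circ\sigma$ substituted for $\pi$, which yields
$$\overline{(\pi\circ\sigma)^{-1}}=\bigl(\overline{\pi\circ\sigma}^{\,-1}\bigr)^{(\pi\circ\sigma)^{-1}}.$$
Then I would expand the inner permutation using the product formula: $\overline{\pi\circ\sigma}^{\,-1}=\bigl(\overline{\pi}\circ\overline{\sigma}^{\pi}\bigr)^{-1}=(\overline{\sigma}^{-1})^{\pi}\circ\overline{\pi}^{-1}$.

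Next I would perform the conjugation by $(\pi\circ\sigma)^{-1}=\sigma^{-1}\circ\pi^{-1}$ explicitly. Writing out
$$\sigma^{-1}\circ\pi^{-1}\circ\bigl[(\overline{\sigma}^{-1})^{\pi}\circ\overline{\pi}^{-1}\bigr]\circ\pi\circ\sigma,$$
the inner $\pi$ and $\pi^{-1}$ factors around $(\overline{\sigma}^{-1})^{\pi}=\pi\circ\overline{\sigma}^{-1}\circ\pi^{-1}$ collapse, leaving
$$\sigma^{-1}\circ\overline{\sigma}^{-1}\circ\pi^{-1}\circ\overline{\pi}^{-1}\circ\pi\circ\sigma=\sigma^{-1}\circ\overline{\sigma}^{-1}\circ(\overline{\pi}^{-1})^{\pi^{-1}}\circ\sigma.$$

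Finally, I would invoke Lemma~\ref{lemma:overline-pi-and-overline-inverse-pi-same-class} a second time in the forward direction, rewriting $(\overline{\pi}^{-1})^{\pi^{-1}}=\overline{\pi^{-1}}$, so the expression becomes $\sigma^{-1}\circ\overline{\sigma}^{-1}\circ\overline{\pi^{-1}}\circ\sigma=(\overline{\sigma}^{-1}\circ\overline{\pi^{-1}})^{\sigma^{-1}}$, as claimed. The only obstacle is clerical: keeping track of which permutation is being conjugated by what, and making sure the inversion commutes correctly with the exponent $\pi$ via $(\overline{\sigma}^{\pi})^{-1}=(\overline{\sigma}^{-1})^{\pi}$. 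No new ideas are needed beyond the two previous lemmas.
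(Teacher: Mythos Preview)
Your proof is correct and relies on the same two lemmas as the paper. The paper's route is marginally shorter: it first rewrites $(\pi\circ\sigma)^{-1}=\sigma^{-1}\circ\pi^{-1}$ and applies Lemma~\ref{lemma:value-of-overline-brack-pi-circ-sigma-brack} directly to $\overline{\sigma^{-1}\circ\pi^{-1}}=\overline{\sigma^{-1}}\circ\overline{\pi^{-1}}^{(\sigma^{-1})}$, so that Lemma~\ref{lemma:overline-pi-and-overline-inverse-pi-same-class} is needed only once (to rewrite $\overline{\sigma^{-1}}$) rather than twice.
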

\begin{proof}
 Lemma~\ref{lemma:value-of-overline-brack-pi-circ-sigma-brack} yields
\begin{eqnarray*}
\overline{(\pi\circ\sigma)^{-1}}=\overline{\sigma^{-1}\circ\pi^{-1}}&=&\overline{\sigma^{-1}}\circ\overline{\pi^{-1}}^{(\sigma^{-1})}\\
&=&\sigma^{-1}\circ\overline{\sigma}^{-1}\circ\sigma\circ\sigma^{-1}\circ\overline{\pi^{-1}}\circ\sigma\quad\mbox{(using Lemma~\ref{lemma:overline-pi-and-overline-inverse-pi-same-class})}\\
&=&(\overline{\sigma}^{-1}\circ\overline{\pi^{-1}})^{\sigma^{-1}}.
\end{eqnarray*}
\end{proof}

A second particular case of conjugate permutations whose transformation by $f(\cdot)$ yields two conjugate permutations is presented below. Recall that $\chi$ is the reverse permutation, \ie  $\chi=\left\langle n\ n-1\ \cdots\ 1\right\rangle$.

\begin{observation}
\label{lemma:overline-pi-and-overline-conjugate-by-chi-same-class}
 For any $\pi$ in $S_n$, we have $\overline{\pi^{\chi}}=((\overline{\pi}^{-1})^\chi)^{(0, 1, 2, \ldots, n)}$.
\end{observation}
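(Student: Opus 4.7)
Write $\rho = (0,1,2,\ldots,n)$ and, for any $\sigma$ in $S_n$, $\beta_\sigma = (0,\sigma_n,\sigma_{n-1},\ldots,\sigma_1)$, so that $\overline{\sigma} = \rho \circ \beta_\sigma$ by definition. Under the identification of $\chi$ with the element of $S_{n+1}$ that fixes $0$ and sends $i$ to $n+1-i$ for $i \geq 1$, conjugation by $\chi$ replaces each label in a cycle notation by its image under $\chi$. Applying this to $\rho$ gives the key auxiliary identity
\[
\rho^{\chi} = (0,n,n-1,\ldots,1) = \rho^{-1}.
\]

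The plan is to start from the right-hand side and simplify. Using $\overline{\pi}^{-1} = \beta_\pi^{-1} \circ \rho^{-1}$ and the fact that conjugation distributes over composition,
\[
(\overline{\pi}^{-1})^{\chi} \;=\; (\beta_\pi^{\chi})^{-1} \circ (\rho^{\chi})^{-1} \;=\; (\beta_\pi^{\chi})^{-1} \circ \rho.
\]
Conjugating this further by $\rho$ collapses nicely:
\[
\bigl((\overline{\pi}^{-1})^{\chi}\bigr)^{\rho} \;=\; \rho \circ (\beta_\pi^{\chi})^{-1} \circ \rho \circ \rho^{-1} \;=\; \rho \circ (\beta_\pi^{\chi})^{-1}.
\]
So the target identity reduces to checking that $(\beta_\pi^{\chi})^{-1} = \beta_{\pi^{\chi}}$.

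To verify this last step I will compute both sides as explicit cycles. On the one hand,
\[
\beta_\pi^{\chi} = (0,\, n+1-\pi_n,\, n+1-\pi_{n-1},\, \ldots,\, n+1-\pi_1),
\]
and inverting a cycle reverses the order of its non-initial elements, giving
\[
(\beta_\pi^{\chi})^{-1} = (0,\, n+1-\pi_1,\, n+1-\pi_2,\, \ldots,\, n+1-\pi_n).
\]
On the other hand, $(\pi^{\chi})_i = \chi(\pi(\chi^{-1}(i))) = n+1 - \pi_{n+1-i}$, so
\[
\beta_{\pi^{\chi}} = (0,\, (\pi^{\chi})_n,\, \ldots,\, (\pi^{\chi})_1) = (0,\, n+1-\pi_1,\, \ldots,\, n+1-\pi_n),
\]
which matches. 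Combining with the previous chain yields $((\overline{\pi}^{-1})^{\chi})^{\rho} = \rho \circ \beta_{\pi^{\chi}} = \overline{\pi^{\chi}}$.

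The only mildly delicate point is bookkeeping: one has to be careful that $\chi$ acts on $\{0,1,\ldots,n\}$ (fixing $0$), that conjugation of a cycle permutes its labels while conjugation of a product distributes over $\circ$, and that inverting a cycle is not the same as conjugating it by $\chi$ despite the visual similarity. Once the identity $\rho^{\chi} = \rho^{-1}$ is in hand, everything else is essentially mechanical.
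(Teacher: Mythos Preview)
Your proof is correct and essentially the same as the paper's: both hinge on the identity $\rho^{\chi}=\rho^{-1}$ (equivalently $\chi\circ\rho=\rho^{-1}\circ\chi$) and then shuffle the factors of $\overline{\pi}=\rho\circ\beta_\pi$ accordingly. The only cosmetic difference is that the paper starts from $\overline{\pi^\chi}$ and uses the conjugate form $\beta_\sigma=\sigma\circ\rho^{-1}\circ\sigma^{-1}$ to reach $(\overline{\pi}^{-1})^{\chi}$ conjugated by $\rho$, whereas you start from the right-hand side and isolate the reduction $(\beta_\pi^{\chi})^{-1}=\beta_{\pi^{\chi}}$, which you verify by writing the cycles out explicitly; the underlying algebra is identical.
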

\begin{proof}
We have by definition:
\begin{eqnarray*}
\overline{\pi^\chi}&=& (0, 1, 2, \ldots, n)\circ(0,\pi^\chi_n,\pi^\chi_{n-1}, \ldots, \pi^\chi_1) \\
&=& (0, 1, 2, \ldots, n)\circ\pi^\chi\circ(0,n,{n-1}, \ldots, 1)\circ(\pi^\chi)^{-1} \\
&=& (\pi^\chi\circ(0,n,{n-1}, \ldots, 1)\circ(\pi^\chi)^{-1}\circ(0, 1, 2, \ldots, n))^{(0, 1, 2, \ldots, n)} \\
&=& (\chi\circ\pi\circ(0,1,2,\ldots, n)\circ\pi^{-1}\circ\chi\circ(0, 1, 2, \ldots, n))^{(0, 1, 2, \ldots, n)} \\
&=& (\chi\circ\pi\circ(0,1,2,\ldots, n)\circ\pi^{-1}\circ(0,n,{n-1}, \ldots, 1)\circ\chi)^{(0, 1, 2, \ldots, n)} \\
&=& (\chi\circ\overline{\pi}^{-1}\circ\chi)^{(0, 1, 2, \ldots, n)}.
\end{eqnarray*}
\end{proof}

Conjugating $\pi$ by $\chi$ corresponds to computing its reverse complement: indeed, $\pi\circ\chi=\left\langle \pi_n\ \pi_{n-1}\ \cdots\ \pi_1\right\rangle$, and $\chi\circ(\pi\circ\chi)=\left\langle n+1-\pi_n\ n+1-\pi_{n-1}\ \cdots\ n+1-\pi_1\right\rangle$. By definition, $\pi$ and $\pi^\chi$ have the same cycle structure, and by the above result, so do their images by $f(\cdot)$. The reverse complement operation is interesting because most (but not all, prefix distances being notable exceptions~\cite{labarre-combinatorial}) genome rearrangement distances are, in addition to being left-invariant, also ``reverse complement-invariant'': for all $\pi$ and $\sigma$ in $S_n$, we have $d(\pi,\sigma)=d(\pi^\chi,\sigma^\chi)$. As a consequence, bounds obtained on the distance between $\pi$ and $\iota$ with respect to a certain set of operations can sometimes be improved by examining $\pi^{-1}$ or $\pi^\chi$.

\citet{eriksson-bridge} introduced another important equivalence relation on permutations that does not preserve their cycle structure in the classical sense but that does preserve the cycle structure of their cycle graphs. This equivalence relation, whose equivalence classes are called \emph{toric permutations}, proved useful in improving bounds on the transposition distance~\cite{eriksson-bridge,labarre-new}. We will see below that $\overline{\pi}$ provides a simple way of navigating through all cycle graphs of the permutations in the same equivalence class. The equivalence relation uses the following notion.

\begin{definition}\label{def:circular-permutation}
The \emph{circular permutation} obtained from a permutation $\pi$ in $S_n$ is $\pi^\circ=0\ \pi_1\ \pi_2\ \cdots\ \pi_n$, with indices taken modulo $n+1$ so that $0=\pi^\circ_0=\pi^\circ_{n+1}.$
\end{definition}

This circular permutation can be read starting from any position, and the original ``linear'' permutation is reconstructed by taking the element following $0$ as $\pi_1$ and removing $0$. For $x$ in $\{0, 1, 2, \ldots, n\}$, let $\overline{x}^m = (x+m)\pmod{n+1}$, and define the following operation on circular permutations:
$$m+\pi^\circ=\overline{0}^m\ \overline{\pi_1}^m\ \overline{\pi_2}^m\ \cdots\ \overline{\pi_n}^m.$$

\begin{definition}\label{def:toric-permutation}
For any $\pi$ in $S_n$, the \emph{toric permutation} $\pi^\circ_\circ$ is the set of permutations in $S_n$ reconstructed from all circular permutations $m+\pi^\circ$ with $0\leq m\leq n$.
\end{definition}

\begin{definition}\label{def:toric-equivalence}
Two permutations $\pi$, $\sigma$ in $S_n$ are \emph{torically equivalent} if $\sigma\in\pi^\circ_\circ$ (or $\pi\in\sigma^\circ_\circ$), which we also write as $\pi\equiv^\circ_\circ\sigma$.
\end{definition}

Let us illustrate those notions using our running example $\pi=\langle 4\ 1\ 6\ 2\ 5\ 7\ 3\rangle$; we have $\pi^{\circ}=0\ 4\ 1\ 6\ 2\ 5\ 7\ 3$, and
$$
\begin{array}{lll}
0+\pi^{\circ} &=& 0\ 4\ 1\ 6\ 2\ 5\ 7\ 3\\
1+\pi^{\circ} &=& 1\ 5\ 2\ 7\ 3\ 6\ 0\ 4\\
2+\pi^{\circ} &=& 2\ 6\ 3\ 0\ 4\ 7\ 1\ 5\\
3+\pi^{\circ} &=& 3\ 7\ 4\ 1\ 5\ 0\ 2\ 6\\
4+\pi^{\circ} &=& 4\ 0\ 5\ 2\ 6\ 1\ 3\ 7\\
5+\pi^{\circ} &=& 5\ 1\ 6\ 3\ 7\ 2\ 4\ 0\\
6+\pi^{\circ} &=& 6\ 2\ 7\ 4\ 0\ 3\ 5\ 1\\
7+\pi^{\circ} &=& 7\ 3\ 0\ 5\ 1\ 4\ 6\ 2\\
\end{array}
$$
which yields $\pi_{\circ}^{\circ}=\{
\left\langle 4\ 1\ 6\ 2\ 5\ 7\ 3\right\rangle$,
$\left\langle 4\ 1\ 5\ 2\ 7\ 3\ 6\right\rangle$,
$\left\langle 4\ 7\ 1\ 5\ 2\ 6\ 3\right\rangle$,
$\left\langle 2\ 6\ 3\ 7\ 4\ 1\ 5\right\rangle$,
$\left\langle 5\ 2\ 6\ 1\ 3\ 7\ 4\right\rangle$,
$\left\langle 5\ 1\ 6\ 3\ 7\ 2\ 4\right\rangle$,
$\left\langle 3\ 5\ 1\ 6\ 2\ 7\ 4\right\rangle$,
$\left\langle 5\ 1\ 4\ 6\ 2\ 7\ 3\right\rangle\}$. 
\citet{hultman-toric} proved the following interesting result.

\begin{lemma}\label{lemma:torism-preserves-cycle-graph}
\cite{hultman-toric} For all $\pi$ in $S_n$ and $0\leq m\leq n$: every cycle in $G(\pi)$ is mapped onto a cycle in $G(\sigma)$, where $\sigma$ is the permutation obtained from $m+\pi^\circ$.
\end{lemma}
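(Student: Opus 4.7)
The plan is to transfer the statement from the cycle graph picture to the algebraic picture of Section 3 and then reduce everything to the fact that conjugation sends cycles to cycles. Recall that the cycles of $G(\pi)$ correspond bijectively to the disjoint cycles of $\overline{\pi}$, so it suffices to exhibit, for each $0 \le m \le n$, a bijection between the cycles of $\overline{\pi}$ and those of $\overline{\sigma}$ that respects vertex labels under the relabelling $x \mapsto x + m \pmod{n+1}$.

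First I would introduce the shift permutation $\rho := (0,1,2,\ldots,n)^m$, which acts on $\{0,1,\ldots,n\}$ by $\rho(x) = x + m \pmod{n+1}$. The key claim is
\[
\overline{\sigma} \;=\; \overline{\pi}^{\,\rho} \;=\; \rho \circ \overline{\pi} \circ \rho^{-1}.
\]
To prove this claim I would analyse the two factors in the definition of $\overline{\cdot}$ separately. The ``grey'' factor $(0,1,2,\ldots,n)$ is a power of $\rho$ and hence commutes with it, so conjugation by $\rho$ leaves it fixed. For the ``black'' factor, the cycle $(0,\pi_n,\pi_{n-1},\ldots,\pi_1)$ is the reverse of the cyclic sequence $\pi^{\circ} = 0\,\pi_1\,\pi_2\,\cdots\,\pi_n$. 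Adding $m$ modulo $n+1$ to every entry turns this into the cyclic sequence $m+\pi^{\circ}$, which by Definition of the reconstruction is (up to cyclic rotation to put $0$ first) exactly $0\,\sigma_1\,\sigma_2\,\cdots\,\sigma_n$. Since reversing and cyclically rotating a cyclic sequence both preserve the cycle notation, it follows that
\[
(0,\sigma_n,\sigma_{n-1},\ldots,\sigma_1) \;=\; \rho\circ (0,\pi_n,\pi_{n-1},\ldots,\pi_1)\circ \rho^{-1}.
\]
Composing with the unchanged grey cycle yields the claimed identity $\overline{\sigma} = \overline{\pi}^{\,\rho}$.

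With the claim in hand, the lemma is immediate: if $(c_1,c_2,\ldots,c_k)$ is any cycle of $\overline{\pi}$, then $(\rho(c_1),\rho(c_2),\ldots,\rho(c_k))$ is a cycle of $\overline{\pi}^{\,\rho} = \overline{\sigma}$, so the map $x \mapsto x + m \pmod{n+1}$ induces the desired bijection between cycles of $\overline{\pi}$ and cycles of $\overline{\sigma}$. Translating this back through $f(\cdot)^{-1}$ sends each alternating cycle of $G(\pi)$ to an alternating cycle of $G(\sigma)$ on the correspondingly relabelled vertices.

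The only delicate point, and the step I expect to require the most care, is the reconstruction bookkeeping: verifying that rotating the cyclic sequence $m+\pi^{\circ}$ to start at $0$ and then stripping the leading $0$ to obtain $\sigma = \langle \sigma_1\,\sigma_2\,\cdots\,\sigma_n\rangle$ really does produce a black cycle $(0,\sigma_n,\ldots,\sigma_1)$ that agrees on the nose with $\rho \circ (0,\pi_n,\ldots,\pi_1) \circ \rho^{-1}$. Everything else is a routine computation using the commutativity of $(0,1,\ldots,n)$ with its own powers and the standard fact that $\tau \mapsto \rho \tau \rho^{-1}$ sends a cycle $(c_1,\ldots,c_k)$ to the cycle $(\rho(c_1),\ldots,\rho(c_k))$.
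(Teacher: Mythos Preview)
Your argument is correct. Note, however, that the paper does not actually prove this lemma: it is quoted from \cite{hultman-toric} and stated without proof. What the paper \emph{does} prove, immediately afterwards, is exactly your ``key claim'' $\overline{\sigma}=\overline{\pi}^{(0,1,2,\ldots,n)^m}$ (this is the very next lemma in the paper), and the paper's proof of that identity is essentially the same computation you outline: observe that adding $m$ modulo $n+1$ to every entry of the black cycle is the same as conjugating by $(0,1,\ldots,n)^m$, while the grey cycle is unchanged because it commutes with its own powers. So you have in effect rediscovered the paper's algebraic reformulation and used it to give a self-contained proof of the cited result, rather than importing it from Hultman; the route is the same, only the logical order is reversed.
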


In other words, if $\pi\equiv_\circ^\circ\sigma$, then $\overline{\pi}$ and $\overline{\sigma}$ are conjugate. We show below how one can iterate over the cycle graphs of all elements in $\pi^\circ_\circ$.

\begin{lemma}
 For all $\pi$, $\sigma$ in $S_n$: if $\sigma^\circ=m+\pi^\circ$, then $\overline{\sigma}=\overline{\pi}^{(0,1,2,\ldots,n)^m}$.
\end{lemma}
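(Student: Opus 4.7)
The plan is to unpack both sides in terms of the defining expression $\overline{\pi}=\rho\circ(0,\pi_n,\pi_{n-1},\ldots,\pi_1)$, where I abbreviate $\rho=(0,1,2,\ldots,n)$, and reduce the equality to a claim about the single cycle $(0,\pi_n,\pi_{n-1},\ldots,\pi_1)$ under the shift operation $m+\pi^\circ$.

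First I would observe the basic fact that $\rho$ commutes with $\rho^m$. Combined with the definition of $\overline{\pi}$, this gives
\[
\overline{\pi}^{\rho^m}=\rho^m\circ\rho\circ(0,\pi_n,\pi_{n-1},\ldots,\pi_1)\circ\rho^{-m}
=\rho\circ(0,\pi_n,\pi_{n-1},\ldots,\pi_1)^{\rho^m}.
\]
So the statement reduces to showing that $(0,\sigma_n,\sigma_{n-1},\ldots,\sigma_1)=(0,\pi_n,\pi_{n-1},\ldots,\pi_1)^{\rho^m}$.

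Next, I would interpret $m+\pi^\circ$ combinatorially: it applies $\rho^m$ elementwise to the circular sequence $0\ \pi_1\ \pi_2\ \cdots\ \pi_n$, producing a new circular sequence whose entries are the $\rho^m$-images of the old ones in the same cyclic order. The cycle $(0,\pi_n,\pi_{n-1},\ldots,\pi_1)$ precisely records this circular sequence read in reverse starting from $0$, so after the shift we obtain the cycle $(m,\rho^m(\pi_n),\rho^m(\pi_{n-1}),\ldots,\rho^m(\pi_1))$, which is exactly $(0,\pi_n,\pi_{n-1},\ldots,\pi_1)^{\rho^m}$ by the standard rule for conjugating a cycle.

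Finally, to recover $\sigma$ from $m+\pi^\circ$ we rotate the sequence so that $0$ is placed at position $0$ and take the remaining entries as $\sigma_1,\sigma_2,\ldots,\sigma_n$. Rotating a cycle does not change it as a permutation, so the cycle just obtained is equal to $(0,\sigma_n,\sigma_{n-1},\ldots,\sigma_1)$, which is what we needed. Combining this with the first displayed equation yields $\overline{\sigma}=\overline{\pi}^{\rho^m}$.

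The only delicate step is the second paragraph, identifying the conjugated cycle with the one coming from the shifted circular sequence; everything else is a matter of bookkeeping with cyclic notation. In particular, one must be careful that ``reconstructing the linear permutation from a circular one'' corresponds to a cyclic rotation of the cycle (not a genuine change of permutation), which is the essential point that makes the conjugation by $\rho^m$ work cleanly.
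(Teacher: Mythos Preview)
Your proof is correct and follows essentially the same route as the paper's: both reduce the claim to showing that the ``black'' cycle $(0,\sigma_n,\ldots,\sigma_1)$ equals $(0,\pi_n,\ldots,\pi_1)^{\rho^m}$, observe that adding $m$ modulo $n+1$ to every entry is precisely conjugation by $\rho^m$, and then use that $\rho$ commutes with $\rho^m$ to finish. You are somewhat more explicit than the paper about the commutativity step and about why rotating the circular sequence to place $0$ first does not change the cycle as a permutation; the paper leaves both of these points implicit.
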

\begin{proof}
By Equation~\ref{eq-alpha}, we have:
\begin{eqnarray*}
 \overline{\sigma}&=&(0,1,2,\ldots,n)\circ(\sigma_0,\sigma_n,\sigma_{n-1}, \ldots,\sigma_1)\\
 &=&(0,1,2,\ldots,n)\circ(m+\pi_0,m+\pi_n,m+\pi_{n-1}, \ldots,m+\pi_1),
\end{eqnarray*}
since by hypothesis $\sigma^\circ=m+\pi^\circ$. 

On the other hand, the mapping $(\pi_0,\pi_n,\pi_{n-1}, \ldots,\pi_1)\mapsto (m+\pi_0,m+\pi_n,m+\pi_{n-1}, \ldots,m+\pi_1)$ consists in replacing each element of the cycle with its value plus $m\pmod{n+1}$, which is by definition equivalent to conjugating $(\pi_0,\pi_n,\pi_{n-1}, \ldots,\pi_1)$ by $(0,1,2,\ldots,n)^m$.
\end{proof}
\begin{corollary}
 For all $\pi$ in $S_n$, we have $\{\overline{\sigma}\ |\ \sigma\in\pi^\circ_\circ\}=\{\overline{\pi}^{(0,1,2,\ldots,n)^m}\ |\ 0\leq m\leq n\}$.
\end{corollary}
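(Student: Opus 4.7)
The plan is to derive the corollary as a direct consequence of the preceding lemma, essentially by unpacking the definitions. By Definition~\ref{def:toric-permutation}, the toric permutation $\pi^\circ_\circ$ is precisely the set of permutations in $S_n$ reconstructed from the circular permutations $m+\pi^\circ$ for $0\leq m\leq n$. So for each $\sigma\in\pi^\circ_\circ$, there exists some $m$ with $0\leq m\leq n$ such that $\sigma^\circ = m+\pi^\circ$; and conversely, for each such $m$, the permutation $\sigma$ reconstructed from $m+\pi^\circ$ belongs to $\pi^\circ_\circ$. This gives a correspondence (not necessarily injective) between the two index sets describing each side of the equality.

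Next, I would invoke the preceding lemma to translate this correspondence through $f(\cdot)$: whenever $\sigma^\circ = m+\pi^\circ$, we have $\overline{\sigma} = \overline{\pi}^{(0,1,2,\ldots,n)^m}$. Applying this in both directions yields the two inclusions. For the inclusion ``$\subseteq$'', any $\overline{\sigma}$ with $\sigma\in\pi^\circ_\circ$ equals $\overline{\pi}^{(0,1,2,\ldots,n)^m}$ for the corresponding $m$, hence lies in the right-hand side. For ``$\supseteq$'', given any $m$ with $0\leq m\leq n$, the permutation $\sigma$ reconstructed from $m+\pi^\circ$ lies in $\pi^\circ_\circ$ and satisfies $\overline{\sigma} = \overline{\pi}^{(0,1,2,\ldots,n)^m}$, so this element of the right-hand side is covered.

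There is no real obstacle here; the corollary is essentially a bookkeeping consequence of the lemma. The only mildly delicate point is that $\pi^\circ_\circ$ is defined as a \emph{set}, so distinct values of $m$ may produce the same $\sigma\in\pi^\circ_\circ$; however, the lemma guarantees that such collisions occur consistently on both sides (two values of $m$ producing the same reconstructed $\sigma$ also produce the same conjugate $\overline{\pi}^{(0,1,2,\ldots,n)^m}$), so set equality is preserved. The proof therefore amounts to two short paragraphs citing Definition~\ref{def:toric-permutation} and the previous lemma.
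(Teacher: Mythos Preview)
Your proposal is correct and matches the paper's approach: the paper states the corollary without proof, treating it as an immediate consequence of the preceding lemma and Definition~\ref{def:toric-permutation}, which is exactly what you do. Your remark about collisions is harmless but unnecessary for set equality, since the two inclusions you establish already suffice.
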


Other relations between the cycle structure of $\pi$ and that of $\overline{\pi}$ can easily be derived from previous work. The following relation allows us to bound the number of odd cycles of $\overline{\pi}$.

\begin{theorem}\label{thm:upper-bound-td}
 \cite{labarre-new} For all $\pi$ in $S_n$, we have $td(\pi)\leq n-c_{odd}(\Gamma(\pi))$.
\end{theorem}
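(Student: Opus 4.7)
The plan is to prove this bound by induction on the quantity $\Phi(\pi) := n - c_{odd}(\Gamma(\pi))$. First, I would observe that $\Phi(\pi) = \sum_{C} \bigl(|C| - [\,|C|\text{ odd}\,]\bigr)$, where the sum ranges over the cycles of $\Gamma(\pi)$: fixed points contribute $0$, odd $k$-cycles contribute $k-1$, and even $k$-cycles contribute $k$. Hence $\Phi(\pi) = 0$ iff every cycle of $\Gamma(\pi)$ has length $1$, iff $\pi = \iota$, giving the base case $td(\iota)=0$.

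For the inductive step, assume $\pi \neq \iota$ and exhibit a block transposition $\tau$ with $c_{odd}(\Gamma(\pi\circ\tau)) \geq c_{odd}(\Gamma(\pi)) + 1$, so that $\Phi(\pi\circ\tau) \leq \Phi(\pi) - 1$ and the induction hypothesis yields $td(\pi) \leq 1 + td(\pi\circ\tau) \leq 1 + \Phi(\pi\circ\tau) \leq \Phi(\pi)$. The construction of $\tau$ splits by the type of non-trivial cycle present in $\Gamma(\pi)$: when $\Gamma(\pi)$ contains an even cycle $C$, pick three suitably chosen positions in the support of $C$ and use a block transposition that splits $C$ into two odd subcycles, a net gain of $+2$ in $c_{odd}$; when only odd cycles of length $\geq 3$ remain, one chooses a transposition that peels a fixed point off an odd $k$-cycle, leaving a $(k-1)$-cycle that is even and returning to the previous case, with an amortised $+1$ per move across the two resulting steps being enough to sustain the induction.

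The main obstacle is relating the action of a block transposition, which rearranges three consecutive blocks of positions, to the cycle structure of $\Gamma(\pi)$, which depends globally on all values of $\pi$: composing $\pi$ with $\tau(i,j,l)$ produces a permutation whose functional graph is not obtained from $\Gamma(\pi)$ by a local surgery, and a priori several cycles of $\Gamma(\pi)$ may be touched simultaneously. Establishing the key existence claim therefore requires a direct combinatorial analysis of how $\Gamma(\pi\circ\tau(i,j,l))$ compares with $\Gamma(\pi)$ as the triple $(i,j,l)$ varies relative to the cycles of $\Gamma(\pi)$, together with the verification that, within each non-trivial cycle, one can always locate a triple of positions realising one of the favourable splits described above. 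Identifying those configurations and combining them through the amortisation argument whenever a single move fails to raise $c_{odd}$ by at least one is, in my view, the technical heart of the proof.
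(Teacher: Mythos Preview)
The paper does not prove this theorem; it is quoted from \cite{labarre-new} and used only as input to a corollary, so there is no ``paper's own proof'' to compare against here. That said, your proposal has a genuine gap.

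Your Case~A claim is false. Take $\pi=\langle 3\ 2\ 1\rangle=(1,3)(2)$ in $S_3$: it has the even $2$-cycle $(1,3)$ in $\Gamma(\pi)$ and $c_{odd}(\Gamma(\pi))=1$. Composing $\pi$ with each of the four block transpositions available in $S_3$ yields, respectively, $(1,2,3)$, $(1,2)(3)$, $(1)(2,3)$ and $(1,3,2)$, every one of which still has $c_{odd}=1$. So no single block transposition raises $c_{odd}$ here, let alone splits the even cycle into two odd ones. More generally, a $2$-cycle of $\Gamma(\pi)$ has only two positions in its support, so your recipe ``pick three suitably chosen positions in the support of $C$'' cannot even be stated. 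The deeper issue is the one you yourself flag: $\tau(i,j,l)$ displaces \emph{every} position in $[i,l-1]$, so its effect on $\Gamma(\pi)$ is nothing like cutting one cycle at three prescribed points; having identified this non-locality as the main obstacle, you then proceed as though it were not one.

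Your amortisation idea is on better footing --- the example above \emph{is} handled in two moves, and $\Phi(\pi)$ is always even, so any genuine decrease must be by at least $2$ --- but your case split assigns the two-move argument only to Case~B, whereas the example shows Case~A already needs it. A proof along these lines would require a uniform two-move lemma valid for every nontrivial $\pi$, together with the combinatorial analysis you explicitly defer; as written, the proposal is a strategy sketch rather than a proof.
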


The following result is an immediate corollary of Theorems~\ref{thm:bafna-pevzner-lower-bound} and~\ref{thm:upper-bound-td}.

\begin{corollary}
 For all $\pi$ in $S_n$, we have $2c_{odd}(\Gamma(\pi))\leq n-1+c_{odd}(\Gamma(\overline{\pi}))$.
\end{corollary}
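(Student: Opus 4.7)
The plan is to simply chain the two cited bounds on $td(\pi)$ and clear denominators. Theorem~\ref{thm:bafna-pevzner-lower-bound} supplies a lower bound on the transposition distance in terms of $c_{odd}(\Gamma(\overline{\pi}))$, while Theorem~\ref{thm:upper-bound-td} supplies an upper bound in terms of $c_{odd}(\Gamma(\pi))$. Sandwiching $td(\pi)$ between these two quantities produces an inequality involving only $n$ and the two cycle counts, and a trivial rearrangement delivers the claimed statement.

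Concretely, I would first write down
\[
\frac{n+1-c_{odd}(\Gamma(\overline{\pi}))}{2} \leq td(\pi) \leq n - c_{odd}(\Gamma(\pi)),
\]
using Theorem~\ref{thm:bafna-pevzner-lower-bound} for the leftmost inequality and Theorem~\ref{thm:upper-bound-td} for the rightmost one. Dropping the middle term gives
\[
n+1-c_{odd}(\Gamma(\overline{\pi})) \leq 2n - 2c_{odd}(\Gamma(\pi)),
\]
after multiplying by $2$, and isolating $2c_{odd}(\Gamma(\pi))$ on the left yields exactly $2c_{odd}(\Gamma(\pi))\leq n-1+c_{odd}(\Gamma(\overline{\pi}))$.

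There is no real obstacle here, since both ingredients are stated earlier in the paper and the arithmetic is a one-line rearrangement; the only thing worth a sentence in the write-up is making clear which inequality comes from which theorem, so that the reader sees why the corollary is genuinely ``immediate''.
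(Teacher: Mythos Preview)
Your proposal is correct and matches the paper's approach exactly: the corollary is stated there as an immediate consequence of Theorems~\ref{thm:bafna-pevzner-lower-bound} and~\ref{thm:upper-bound-td}, and the one-line rearrangement you wrote out is precisely what ``immediate'' means here.
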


Similarly, the following result is an immediate corollary of Theorem~\ref{thm:bid-lower-bound} and of the characterisation of exchanges as restricted block-interchanges.

\begin{corollary}
 For all $\pi$ in $S_n$, we have $2c(\Gamma(\pi))\leq n-1+c(\Gamma(\overline{\pi}))$.
\end{corollary}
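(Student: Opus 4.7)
The plan is to sandwich the block-interchange distance $bid(\pi)$ between a lower bound involving $c(\Gamma(\overline{\pi}))$ and an upper bound involving $c(\Gamma(\pi))$, mirroring exactly the structure that yielded the preceding corollary for odd cycles.

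First, I would invoke Theorem~\ref{thm:bid-lower-bound} directly to get the bound
\[
bid(\pi) \;\geq\; \frac{n+1-c(\Gamma(\overline{\pi}))}{2}.
\]
Next, I would exploit the ``characterisation of exchanges as restricted block-interchanges'' mentioned in Section~\ref{sec:intro-gr-prefix}: setting $j=i+1$ and $l=k+1$ in Definition~\ref{def:block-interchange} turns a block-interchange into an exchange, so every exchange is a block-interchange. Consequently any factorisation of $\pi$ into exchanges is \emph{a fortiori} a factorisation of $\pi$ into block-interchanges, yielding $bid(\pi)\leq exc(\pi)$.

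To close the sandwich I would use the classical formula $exc(\pi) = n - c(\Gamma(\pi))$ (this is the same result of \citet{jerrum-complexity} that was already quoted in Theorem~\ref{thm:bid-lower-bound} specialised to transpositions in $S_n$ rather than $S_{n+1}$, and is the trivial case of Theorem~\ref{thm:formula-for-pexc} with ``prefix'' dropped). Chaining the three ingredients gives
\[
\frac{n+1-c(\Gamma(\overline{\pi}))}{2} \;\leq\; bid(\pi)\;\leq\; exc(\pi) \;=\; n - c(\Gamma(\pi)),
\]
which rearranges into the claimed inequality $2c(\Gamma(\pi))\leq n-1+c(\Gamma(\overline{\pi}))$.

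There is really no substantive obstacle here: the statement is advertised as an ``immediate corollary'', and indeed once one observes the exchange/block-interchange containment the proof reduces to comparing two known formulas. The only minor care needed is to make sure the formula $exc(\pi)=n-c(\Gamma(\pi))$ is invoked in the $S_n$ setting (acting on $\{1,\ldots,n\}$) rather than the $S_{n+1}$ setting (acting on $\{0,1,\ldots,n\}$) that appears in Theorem~\ref{thm:bid-lower-bound}; the asymmetry between the ``$n$'' on the exchange side and the ``$n+1$'' on the block-interchange side is precisely what produces the $-1$ in the final inequality.
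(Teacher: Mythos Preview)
Your proof is correct and follows exactly the route the paper intends: combine the lower bound $bid(\pi)\geq (n+1-c(\Gamma(\overline{\pi})))/2$ from Theorem~\ref{thm:bid-lower-bound} with the upper bound $bid(\pi)\leq exc(\pi)=n-c(\Gamma(\pi))$ coming from the fact that exchanges are special block-interchanges, and rearrange. This is precisely what the paper means by ``immediate corollary of Theorem~\ref{thm:bid-lower-bound} and of the characterisation of exchanges as restricted block-interchanges''.
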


\subsection{Descents of $\pi$ and cycles of $\overline{\pi}$}

Aside from relations between cycle structures, we can also establish relations between pairs of elements of $\pi$ and cycles of $\overline{\pi}$. An example of such a relation is the fact that the number of adjacencies in $\langle 0\ \pi_1\ \pi_2\ \cdots\ \pi_n\ n+1\rangle$ equals $c_1(\Gamma(\overline{\pi}))$. We will prove that a less obvious relation connects the \emph{descents} of $\pi$ (defined below) and the cycles of $\overline{\pi}$.

\begin{definition}
A \emph{descent} in a permutation $\pi$ is a pair $(\pi_{i-1},\pi_i)$ such that $\pi_i<\pi_{i-1}$.
\end{definition}

For instance, the permutation $\langle 4\downarrow 1\ 6\downarrow 2\ 5\ 7\downarrow 3\rangle$ has three descents, indicated by vertical arrows.

\begin{definition}
A cycle $C$ in $G(\pi)$ \emph{contains} a descent $(\pi_{i-1},\pi_i)$ if $(\pi_i,\pi_{i-1})$ is a black arc of $C$.
\end{definition}

We now derive bounds on the number of descents contained by cycles in $G(\pi)$.

\begin{lemma}\label{lemma:descents-and-cycles}
 For all $\pi$ in $S_n$, every cycle of length $\ell\geq 2$ in $G(\pi)$ contains at most $\ell-1$ descents and at least one descent of $\pi$.
\end{lemma}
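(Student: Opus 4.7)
The plan is to prove both bounds by contradiction, exploiting the one-step recurrence that the alternating structure of $G(\pi)$ imposes on successive black arcs. Label the black arcs of the cycle $b_1,\ldots,b_\ell$ in cyclic order and write $b_j=(\pi_{i_j},\pi_{i_j-1})$ with indices taken modulo $n+1$, so $i_j=0$ corresponds to the wrap-around black arc $(0,\pi_n)$. Following $b_j$ by its grey arc carries $\pi_{i_j-1}$ to $\pi_{i_j-1}+1\pmod{n+1}$, and this point must be $\pi_{i_{j+1}}$; hence the key identity
$$\pi_{i_{j+1}}\equiv \pi_{i_j-1}+1\pmod{n+1}.$$

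For the upper bound, assume every $b_j$ is a descent, so each $i_j\geq 2$ and $\pi_{i_j}<\pi_{i_j-1}$. If $\pi_{i_j-1}=n$ then the recurrence gives $\pi_{i_{j+1}}=0$, hence $i_{j+1}=0$, contradicting $i_{j+1}\geq 2$; so no wrap occurs and $\pi_{i_{j+1}}=\pi_{i_j-1}+1\geq \pi_{i_j}+2>\pi_{i_j}$. This produces a strictly increasing cyclic chain $\pi_{i_1}<\pi_{i_2}<\cdots<\pi_{i_\ell}<\pi_{i_1}$, which is impossible.

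For the lower bound, assume no $b_j$ is a descent and split on whether the cycle contains the wrap-around arc $b_{j_0}$ with $i_{j_0}=0$. If it does, then $\ell\geq 2$ forces the arc $b_j$ preceding $b_{j_0}$ to satisfy $i_j\geq 1$; the recurrence $\pi_{i_{j_0}}=0$ forces $\pi_{i_j-1}=n$, yet the non-descent hypothesis would require $\pi_{i_j}>\pi_{i_j-1}=n$, which is absurd. If the cycle avoids $b_{j_0}$, then $i_j\geq 1$ for every $j$ and the non-descent condition gives $\pi_{i_j}>\pi_{i_j-1}$ at each step (trivially when $i_j=1$, since $\pi_0=0$); the case $\pi_{i_j-1}=n$ is again ruled out (else $i_{j+1}=0$), so $\pi_{i_{j+1}}=\pi_{i_j-1}+1\leq \pi_{i_j}$ with no wrap. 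Going around the cycle, $\pi_{i_1}\geq \pi_{i_2}\geq \cdots\geq \pi_{i_\ell}\geq \pi_{i_1}$ forces all these values to be equal, so $\pi_{i_{j+1}}=\pi_{i_j}$, hence $i_{j+1}=i_j$ for every $j$, giving $\ell=1$ and contradicting $\ell\geq 2$.

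The main technical nuisance is keeping careful track of the wrap-around arc $(0,\pi_n)$ and the boundary cases $i_j=1$ and $\pi_{i_j-1}=n$; in each instance, the hypothesis $\ell\geq 2$ and the alternating structure forbid the exceptional behaviour, after which the one-step recurrence drives the contradiction in both directions.
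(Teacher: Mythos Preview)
Your argument is correct and follows essentially the same route as the paper: set up the recurrence $\pi_{i_{j+1}}=\pi_{i_j-1}+1$ coming from the black--grey alternation, assume the extremal case (all descents, or no descent), and push a cyclic chain of inequalities to a contradiction. Your upper-bound argument is virtually identical to the paper's; for the lower bound the paper finishes with a small pigeonhole count on the values $\pi_{i_x}$, whereas you close with the cyclic non-increasing chain forcing $\ell=1$, and you handle the wrap-around black arc and the boundary position $i_j=1$ more explicitly than the paper does.
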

\begin{proof}
For clarity, let us write the vertices of $C$ in the order in which $C$ visits them, starting with the element whose position in $\pi$ is maximal: we get $C=(\pi_{i_1},\pi_{j_1}, \pi_{i_2}, \pi_{j_2}, \ldots, \pi_{i_k}, \pi_{j_k})$, where $i_1$ (\resp $j_k$) is the largest (\resp smallest) position of an element of $\widetilde{\pi}$ appearing in $C$. We identify here $\widetilde{\pi}_0$ and $\widetilde{\pi}_{n+1}\equiv\widetilde{\pi}_0\pmod{n+1}$. Recall that $(\pi_{i_x}, \pi_{j_x})$ for $1\leq x\leq k$ is a black arc of $C$, and that by Definition~\ref{def:cycle-graph}, the following relation holds:
\begin{equation}
 \pi_{i_x}=\pi_{j_{x-1}}+1 \mbox{ for all } 1\leq x \leq k, \mbox{ and } \pi_{i_1}=\pi_{j_k}+1.\label{eqn:cycle-graph-conditions}
\end{equation}
\begin{enumerate}
 \item for the upper bound: assume on the contrary that $C$ contains $\ell$ descents; then every black edge of $C$ corresponds to a descent, and we have:
\begin{equation}
 \pi_{j_x}>\pi_{i_x}  \mbox{ for } 1\leq x\leq k.\label{eqn:relations-elements-cycle-all-descents}
\end{equation}
By alternating between the conditions specified by Equations~\ref{eqn:relations-elements-cycle-all-descents} and~\ref{eqn:cycle-graph-conditions}, we obtain:
\begin{eqnarray*}
 \pi_{j_k}&>&\pi_{i_k}=\pi_{j_{k-1}}+1>\pi_{i_{k-1}}+1=\pi_{j_{k-2}}+2>\cdots=\pi_{j_1}+k-1\\
&>&\pi_{i_1}+k-1=\pi_{j_k}+k,
\end{eqnarray*}
which is clearly a contradiction.

\item for the lower bound: assume on the contrary that $C$ contains no descent; we have:
\begin{equation}
 \pi_{j_x}<\pi_{i_x}  \mbox{ for } 1\leq x\leq k.\label{eqn:relations-elements-cycle}
\end{equation}
 By alternating between the conditions specified by Equations~\ref{eqn:relations-elements-cycle} and~\ref{eqn:cycle-graph-conditions}, we obtain:
\begin{eqnarray*}
 \pi_{i_1}-1=\pi_{j_k}&<&\pi_{i_k}=\pi_{j_{k-1}}+1<\pi_{i_{k-1}}+1=\pi_{j_{k-2}}+2<\cdots=\pi_{j_1}+k-1\\
&<&\pi_{i_1}+k-1.
\end{eqnarray*}
For the above relations to hold, elements from the set $A=\{\pi_{i_k}, \pi_{i_{k-1}}+1, \ldots,\pi_{i_2}+k-2\}$ can only be assigned values from the set $B=\{\pi_{i_1}+1,\pi_{i_1}+2,\ldots,\pi_{i_1}+k-2\}$. However, we have $k-1=|A|>|B|=k-2$, which clearly makes it impossible to obtain a permutation.
\end{enumerate}
Finally, note that $\pi$ and $\widetilde{\pi}$ can be regarded as equivalent as far as descents are concerned, since $(\widetilde{\pi}_0,\widetilde{\pi}_1)$ and $(\widetilde{\pi}_n,\widetilde{\pi}_{n+1})$ cannot be descents.
\end{proof}

The following result is a direct corollary of the above.

\begin{proposition}\label{prop:descents-of-twoperms}
For any $2$-permutation $\pi$ in $S_n$, we have $des(\pi)=(n+1)/2$.
\end{proposition}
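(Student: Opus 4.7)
The plan is to use Lemma~\ref{lemma:descents-and-cycles} cycle-by-cycle on $G(\pi)$. Since $\pi$ is a $2$-permutation, every cycle of $\overline{\pi}$ has length $2$, and (as noted in the paper) speaking about cycles of $\overline{\pi}$ is equivalent to speaking about alternating cycles of $G(\pi)$. In particular, $G(\pi)$ decomposes into exactly $(n+1)/2$ alternating $2$-cycles, and these cycles partition the $n+1$ black arcs of $G(\pi)$ into pairs.

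The key observation is that when $\ell=2$ in Lemma~\ref{lemma:descents-and-cycles}, the two bounds collide: each such cycle contains \emph{at least one} descent and \emph{at most} $\ell-1=1$ descent, hence \emph{exactly one}. Summing this count over all $(n+1)/2$ alternating cycles of $G(\pi)$ then gives exactly $(n+1)/2$ descents in total, provided we know that different cycles account for different descents.

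The one thing to justify before concluding is precisely that no descent is double-counted and none is missed. This follows from the definition preceding the lemma: a cycle $C$ contains the descent $(\pi_{i-1},\pi_i)$ iff the black arc $(\pi_i,\pi_{i-1})$ belongs to $C$. Since each black arc lies in a unique alternating cycle, the descents are partitioned among the cycles, so the total count equals the sum of per-cycle counts, namely $(n+1)/2\cdot 1=(n+1)/2$.

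I do not anticipate a serious obstacle; the whole argument is a clean corollary of Lemma~\ref{lemma:descents-and-cycles} with $\ell=2$. The only mildly delicate point is keeping straight the convention on boundary pairs in $\widetilde{\pi}$ (namely that $(\widetilde{\pi}_0,\widetilde{\pi}_1)$ and $(\widetilde{\pi}_n,\widetilde{\pi}_{n+1})$ are never descents), but this is exactly what makes the descents of $\pi$ and $\widetilde{\pi}$ interchangeable in the lemma, so nothing extra is needed.
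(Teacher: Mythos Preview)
Your proof is correct and follows exactly the same approach as the paper: apply Lemma~\ref{lemma:descents-and-cycles} with $\ell=2$ to each of the $(n+1)/2$ cycles of $\overline{\pi}$, so that each contributes exactly one descent. Your additional remark that no descent is double-counted (because each black arc lies in a unique alternating cycle) makes explicit something the paper leaves implicit, but the argument is otherwise identical.
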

\begin{proof}
By definition, $\overline{\pi}$ contains exactly $(n+1)/2$ cycles of length $2$, and by Lemma~\ref{lemma:descents-and-cycles}, each of these cycles contains exactly one descent of $\pi$.
\end{proof}


\section{Conclusions}

We presented a new framework for reformulating any edit distance problem on permutations as a minimum-length factorisation problem on a related even permutation, under the implicit assumption that the edit operations are revertible. This approach is based on a new representation of a structure known as the \emph{cycle graph}, which pervades the field of genome rearrangements in several different forms; it previously allowed us to enumerate permutations whose cycle graph decomposes into a given number of alternating cycles~\cite{doignon-hultman}, and allowed us in this work to recover two previously known results in a simple and unified way. Moreover, we used our approach to derive a new lower bound on the prefix transposition distance that, as we showed both theoretically and experimentally, is a significant improvement over previous results. From that result, we deduced an improved lower bound on the prefix transposition diameter of the symmetric group, whose exact value is still unknown. Finally, we investigated other relations between permutations and their cycle graphs that we hope will prove useful in obtaining new results.

Several interesting questions and leads for future work arise. First, our method provides an automated way of obtaining lower bounds on distances between permutations; is there an analogous way of obtaining \emph{upper} bounds instead? Second, we initiated the study of relations between statistics on a permutation and statistics on the permutation that corresponds to its cycle graph. Can other relations be deduced and used to prove other results, including tighter bounds on the distances of interest? Third, permutations are but one structure for which the cycle graph has been defined. Other structures, such as \emph{signed} permutations, give rise to a more general structure known as the \emph{breakpoint graph}. Are there analogs, or generalisations of $f(\cdot)$ that can yield similar results on signed permutations as well? 
Finally, another question is whether Cayley graphs obtained from genome rearrangement operations can yield good interconnection networks. For instance, (signed) reversals generalise the operations that generate the (burnt) pancake network, exchanges generalise the operations that generate the star network, and prefix transpositions generalise the operations that generate the bi-rotator graphs (see \citet{lakshmivarahan-symmetry} for definitions). It seems likely that collaborations between researchers in both fields could be fruitful in investigating this topic.

\section*{Acknowledgements}

The author wishes to thank Jean-Paul Doignon for suggesting the new definition of $\overline{\pi}$, which greatly simplifies formulas, as well as \citeauthor{galvao-db}, whose freely available source code~\cite{galvao-db} allowed the extension of Tables~\ref{tab:comparing-lower-bounds-on-ptd} and~\ref{tab:gap-between-my-ptd-lower-bound-and-ptd} by two lines.

\bibliographystyle{mynatstyle}
\bibliography{lowerbounds}

\end{document}